
\documentclass[acmsmall,manuscript,screen]{acmart}

\AtBeginDocument{%
  }

\setcopyright{acmcopyright}
\copyrightyear{2018}
\acmYear{2018}
\acmDOI{XXXXXXX.XXXXXXX}

\usepackage{amsmath}
\usepackage{amsfonts}
\usepackage{amsthm}
\usepackage{url}
\usepackage{xspace}
\usepackage{soul}
\usepackage{xcolor}
\usepackage{algorithm}
\usepackage{algpseudocode}
\usepackage[inline]{enumitem}
\let\origtheorem\theorem
\let\origproof\proof
\usepackage{z-eves}
\let\theorem\origtheorem
\let\proof\origproof
\usepackage{tikz}
\usetikzlibrary{positioning,fit,calc,babel,
                shapes,arrows,automata,decorations.pathmorphing}


\newcommand{\true}{true}
\newcommand{\false}{false}
\newcommand{\lfun}{\longrightarrow}

\newcommand{\card}[1]{\lvert #1 \rvert}
\renewcommand{\plus}{\mathbin{\scriptstyle\sqcup}}
\renewcommand{\Cup}{un}
\newcommand{\disj}{\parallel}
\newcommand{\Size}{size}
\newcommand{\Nsize}{nsize}
\renewcommand{\Cap}{inters}
\newcommand{\Smin}{smin}
\newcommand{\Smax}{smax}
\newcommand{\e}{\emptyset}
\newcommand{\w}{\{\cdot \plus \cdot\}}
\renewcommand{\i}{[\cdot,\cdot]}
\newcommand{\Ncup}{nun}
\newcommand{\Ndisj}{\not\disj}

\newcommand{\CARD}{\card{\cdot}}
\newcommand{\LCARD}{\mathcal{L}_{\CARD}}
\newcommand{\STEPCARD}{\mathsf{solve\_size}}
\newcommand{\INT}{[\,]}
\newcommand{\LINT}{\mathcal{L}_{\INT}}
\newcommand{\setlog}{$\{log\}$\xspace}
\newcommand{\SATCARD}{\mathcal{SAT}_{\CARD}}
\newcommand{\SATINT}{\mathcal{SAT}_{\INT}}
\newcommand{\CLPSET}{CLP($\mathit{SET}$)\xspace}

\newcommand{\sSet}{\mathsf{Set}}
\newcommand{\sInt}{\mathsf{Int}}
\newcommand{\sUr}{\mathsf{Ur}}
\newcommand{\Var}{\mathcal{V}}
\newcommand{\Set}{\mathsf{S}}
\newcommand{\FSet}{\mathcal{F}_\Set}
\newcommand{\Int}{\mathsf{Z}}
\newcommand{\FInt}{\mathcal{F}_\Int}
\newcommand{\Ur}{\mathsf{U}}
\newcommand{\FUr}{\mathcal{F}_\Ur}
\newcommand{\PiSet}{\Pi_\mathsf{S}}
\newcommand{\PiInt}{\Pi_\Int}
\newcommand{\TINT}{\mathcal{T}_{\INT}}
\newcommand{\TInt}{\mathcal{T}_\Int}
\newcommand{\TUr}{\mathcal{T}_\Ur}
\newcommand{\CINT}{\mathcal{C}_{\INT}}
\newcommand{\FINT}{\Phi_{\INT}}

\newcommand{\iS}{\mathcal{R}}
\newcommand{\iF}[1]{(#1)^\iS}

\newcommand{\q}{\text{\normalfont\'{}}}
\newcommand{\qr}{\text{\normalfont\'{}}\hspace{5pt}}
\newcommand{\ql}{\hspace{5pt}\text{\normalfont\'{}}}

\newtheorem{example}{Example}
\newtheorem{definition}{Definition}
\newtheorem{remark}{Remark}
\newtheorem{theorem}{Theorem}
\newtheorem{lemma}{Lemma}

\newif\ifcomments
\commentstrue 



\begin{document}

\title{A Decision Procedure for a Theory of Finite Sets with Finite Integer Intervals}

\author{Maximiliano Cristi\'a}
\email{cristia@cifasis-conicet.gov.ar}
\affiliation{%
  \institution{Universidad Nacional de Rosario and CIFASIS}
  \city{Rosario}
  \country{Argentina}
}

\author{Gianfranco Rossi}
\email{gianfranco.rossi@unipr.it}
\affiliation{%
  \institution{Universit\`a di Parma}
  \city{Parma}
  \country{Italy}}

\begin{abstract}
In this paper we extend a decision procedure for the Boolean algebra of finite
sets with cardinality constraints ($\LCARD$) to a decision procedure for
$\LCARD$ extended with set terms denoting finite integer intervals ($\LINT$).
In $\LINT$ interval limits can be integer linear terms including
\emph{unbounded variables}. These intervals are a useful extension because they
allow to express non-trivial set operators such as the minimum and maximum of a
set, still in a quantifier-free logic. Hence, by providing a decision procedure
for $\LINT$ it is possible to automatically reason about a new class of
quantifier-free formulas. The decision procedure is implemented as part of the
\setlog (`setlog') tool. The paper includes a case study based on the elevator algorithm
showing that \setlog can automatically discharge all its invariance lemmas some
of which involve intervals.
\end{abstract}

\begin{CCSXML}
<ccs2012>
   <concept>
       <concept_id>10003752.10003790.10003794</concept_id>
       <concept_desc>Theory of computation~Automated reasoning</concept_desc>
       <concept_significance>500</concept_significance>
       </concept>
   <concept>
       <concept_id>10003752.10003790.10003795</concept_id>
       <concept_desc>Theory of computation~Constraint and logic programming</concept_desc>
       <concept_significance>500</concept_significance>
       </concept>
   <concept>
       <concept_id>10003752.10003790.10002990</concept_id>
       <concept_desc>Theory of computation~Logic and verification</concept_desc>
       <concept_significance>500</concept_significance>
       </concept>
 </ccs2012>
\end{CCSXML}

\ccsdesc[500]{Theory of computation~Automated reasoning}
\ccsdesc[500]{Theory of computation~Constraint and logic programming}
\ccsdesc[500]{Theory of computation~Logic and verification}

\keywords{\setlog, set theory, integer intervals, decision procedure, constraint logic programming}

\received{20 February 2007}
\received[revised]{12 March 2009}
\received[accepted]{5 June 2009}

\maketitle

\section{Introduction}
In the context of formal verification and analysis, it is necessary to
discharge a number of verification conditions or proof obligations. Tools
capable of automating such proofs are essential to render the development
process cost-effective. At the base of proof automation is the concept of
decision procedure. Methods and tools based on set theory, such as B
\cite{Abrial00}, ProB \cite{Leuschel00} and Atelier-B \cite{atelierb}, provide
a proven vehicle for formal modeling, specification, analysis and verification
of software systems. These methods would benefit from new decision procedures
for fragments of set theory.

Departing from this observation, in this paper we provide a decision procedure for the Boolean algebra of finite sets extended with cardinality constraints and finite integer intervals or ranges\footnote{From now on, we will say integer intervals or just intervals meaning finite integer intervals.}.
To the best of our knowledge this is the first time this fragment of set theory is proven to be decidable.
The Boolean algebra of finite sets extended with cardinality constraints, denoted $\LCARD$ (read `l-card'), is known to be decidable since quite some time and solvers supporting it already exist \cite{DBLP:conf/frocos/Zarba02,Kuncak2006,Bansal2018,DBLP:journals/tplp/CristiaR23}.
The addition of integer intervals (i.e., $[k,m] = \{p \in \num | k \leq p \leq m\}$) allows to reason about operations such as the minimum of a set, the $i$-th smallest element of a set, partitioning of a set in the elements below and above a given number, etc.
Besides, integer intervals are important in the verification of programs with arrays \cite{DBLP:conf/vmcai/BradleyMS06} and static analysis \cite{DBLP:journals/tcs/SuW05}.
Although the other approaches just mentioned above could potentially reason about these set operations and solve problems in this field, the most challenging and original aspect of our approach is to obtain these results by successive extensions of a base theory of sets.

Indeed, the base theory of sets is the one supported by a Constraint Logic Programming (CLP) language known as \CLPSET \cite{Dovier00} providing a decision procedure for the Boolean algebra of \emph{hereditarily finite sets}, i.e., finitely nested sets that are finite at each level of nesting.
\CLPSET has been extended in several ways \cite{DBLP:journals/jar/CristiaR20,DBLP:journals/jar/CristiaR21a}.
In particular, $\LCARD$ extends \CLPSET with cardinality and linear integer constraints \cite{DBLP:journals/tplp/CristiaR23}.
In this paper we show how a further extension, called $\LINT$ (read `l-int'), can deal with integer intervals to provide a decision procedure.

\CLPSET and all of its extensions have been implemented in a tool called \setlog (read `setlog'), itself implemented in SWI-Prolog \cite{DBLP:journals/jlp/DovierOPR96,setlog}.
\setlog can be used as a CLP language and as a satisfiability solver.
This duality is reflected in the fact that \setlog code behaves as both a formula and a program.
Then, \setlog users can write programs and prove their properties using the same and only code and tool, all within set theory.
As a satisfiability solver, \setlog has proved to solve non-trivial problems \cite{DBLP:journals/jar/CristiaR21,DBLP:journals/jar/CristiaR21b}.
After presenting the implementation of $\LINT$ in \setlog we provide empirical evidence that it is useful in practice through a preliminary empirical evaluation and a case study based on the elevator algorithm.
The case study shows the duality formula-program that can be exploited in \setlog.

\paragraph{Structure of the paper}
The paper is structured as follows.
In Section \ref{motivation} we show an example where intervals combined with sets are needed to motivate this work.
Section \ref{language} presents a detailed account of the syntax and semantics of $\LINT$.
The algorithm to decide the satisfiability of $\LINT$ formulas, called $\SATINT$ (read `sat-int'), is described in Section \ref{satcard}.
$\SATINT$ is proved to be a decision procedure for $\LINT$ in Section \ref{decproc}.
In Section \ref{expressiveness} we provide several examples of non-trivial operations definable as $\LINT$ formulas as well as the kind of automated reasoning $\SATINT$ is capable of.
The implementation of $\SATINT$ as part of the \setlog tool is discussed in Section \ref{implementation}, including an initial empirical evaluation in Section \ref{empirical}.
In order to provide further empirical evidence that \setlog can reason about intervals combined with sets we present a case study based on the elevator problem in Section \ref{casestudy}.
In Section \ref{relwork} we put our work in the context of some results about interval reasoning.
Section \ref{conclusions} presents our conclusions.
Three appendices include technical details which are duly referenced throughout the paper.

\section{\label{motivation}Motivation}

Informally, $\LCARD$ is a language including the classic operators of set theory plus the cardinality operator.
Set operators are provided as constraints.
For example, $\Cup(A,B,C)$ is interpreted as $C = A \cup B$; $A \disj B$ is interpreted as $A \cap B = \emptyset$; and $\Size(A,j)$ as $\card{A} = j$.
$\LCARD$ includes so-called \emph{negative constraints} implementing the negation of constraints.
For example, $x \notin A$ corresponds to $\lnot x \in A$, and $\Ncup(A,B,C)$ corresponds to $\lnot C = A \cup B$.
In $\LCARD$ sets are finite and are built from the empty set ($\emptyset$) and a  set constructor, called \emph{extensional set}, of the form $\{x \plus A\}$ whose interpretation is $\{x\} \cup A$.
In $\{x \plus A\}$, both $x$ and $A$ can be variables.
Set elements can be integer numbers, some other ur-elements or sets.\footnote{Ur-elements (also known as atoms or individuals) are objects which contain no elements but are distinct from the empty set.}
Besides, $\LCARD$ includes linear integer terms and constraints.
Variables can denote integer numbers, ur-elements or sets.
Formulas are conjunctions and disjunctions of constraints.
$\LCARD$ is a decidable language whose solver is called $\SATCARD$ which has been implemented as part of the \setlog tool \cite{DBLP:journals/tplp/CristiaR23}.

$\LCARD$ can express, for example, that a set is the disjoint union of two sets of equal cardinality:
\begin{equation}
\Cup(A,B,C) \land A \disj B \land \Size(A,j) \land \Size(B,j)
\end{equation}
However it is unclear how $\LCARD$ can express that $C$ is an integer interval.
This is a limitation if $C$ is the collection of items numbered from $k$ to $m$ that are processed by agents $A$ and $B$ who should process an equal amount of them.
Clearly, the post-condition of such a system is:
\begin{equation}\label{eq:mot}
\Cup(A,B,[k,m]) \land A \disj B \land \Size(A,j) \land \Size(B,j)
\end{equation}
where $[k,m]$ denotes the set $\{p \in \num | k \leq p \leq m\}$, with $k$ and $m$ variables.
Hence, $\LCARD$ and $\SATCARD$ are not enough to describe such a system nor to automatically prove properties of it.

In this paper we show how $\LCARD$ and $\SATCARD$ can be extended with set terms of the form $[k,m]$ where either of the limits can be integer linear terms including unbounded variables---making $[k,m]$ a finite set.
The extensions are noted $\LINT$ and $\SATINT$, and $\SATINT$ is proved to be a decision procedure for $\LINT$ formulas.
One of the two key ideas behind this extension is the application of the following identity true of any set $A$ and any non-empty integer interval\footnote{Note that $[k,m]$ is not empty only when $k \leq m$.} $[k,m]$ (see the proof in Appendix \ref{app:proofs}):
\begin{equation}\label{eq:id}
k \leq m \implies (A = [k,m] \iff A \subseteq [k,m] \land \card{A} = m - k + 1)
\end{equation}
By means of this identity, \eqref{eq:mot} can be rewritten as follows\footnote{The case $m < k$ is not considered as it adds nothing to the understanding of the problem.}:
\begin{equation}\label{eq:mot-int}
\begin{split}
& \Cup(A,B,N) \land A \disj B \land \Size(A,j) \land \Size(B,j)
 \land N \subseteq [k,m] \land \Size(N,m - k + 1)
\end{split}
\end{equation}
where $N$ is a fresh variable. In this way \eqref{eq:mot-int} can be divided into a $\LCARD$ formula
\begin{equation}
\Cup(A,B,N) \land A \disj B \land \Size(A,j) \land \Size(B,j) \land \Size(N,m - k + 1)
\end{equation}
plus the constraint $N \subseteq [k,m]$.

Observe that in $\Cup(A,B,[k,m])$ in \eqref{eq:mot}, $A$ and $B$ can be variables and extensional sets built with the set constructor $\w$.
In general, $\LINT$ allows to express and reason about formulas where set variables, extensional sets and integer intervals can be freely combined. 
Integer intervals are a particular kind of set.
Even operations such as $[k,m] \setminus [i,j]$, whose result is not necessarily an interval, are dealt with correctly.
Furthermore, $\LINT$ can also deal with formulas where intervals are set elements, e.g., $[k,m] \in \{\{1,x,y,3\},\{-3,2,z\}\}$,
thanks to set unification \cite{Dovier2006}.
One of the keys for this result is to encode integer intervals in terms of the cardinality and integer constraints already provided by $\LCARD$.
In this sense, $\LINT$ takes a different direction than logics dealing only with intervals, e.g., \cite{DBLP:conf/padl/ErikssonP20}.

In general, any $\LINT$-formula $\Phi$ can be rewritten by means of \eqref{eq:id} into a conjunction of the form $\Phi_{\CARD} \land \Phi_{\subseteq[\,]}$ where $\Phi_{\CARD}$ is a $\LCARD$ formula and $\Phi_{\subseteq[\,]}$ is a conjunction of constraints of the form $X \subseteq [p,q]$ where $X$ is a variable and either $p$ or $q$ are variables.
We will refer to such intervals as \emph{variable-intervals}.
In this way, $\SATINT$ relies on $\SATCARD$ as follows.
If $\SATCARD$ finds $\Phi_{\CARD}$ unsatisfiable then $\Phi$ is unsatisfiable. 
However, if $\SATCARD$ finds $\Phi_{\CARD}$ satisfiable we still need to check if $\Phi_{\subseteq[\,]}$ does not compromise the satisfiability of $\Phi_{\CARD}$.
At this point the second key idea of our method comes into play.
First, $\SATINT$ asks $\SATCARD$ to compute a \emph{minimum solution} of $\Phi_{\CARD}$---roughly, a solution of $\Phi_{\CARD}$ where sets have the minimum number of elements \cite{DBLP:journals/tplp/CristiaR23}.
Second, if the computed minimum solution of $\Phi_{\CARD}$ is a solution of $\Phi$, then $\Phi$ is clearly satisfiable.
Third, we have proved that if any minimum solution of $\Phi_{\CARD}$ is not a solution of $\Phi$, then $\Phi$ is unsatisfiable.
That is, if any minimum solution of $\Phi_{\CARD}$ is not a solution of $\Phi$, then any larger solution\footnote{In this context, a larger solution is a solution where at least one of the sets involved in a minimum solution has at least one more element w.r.t. the cardinality of the minimum solution.} will not be a solution of $\Phi$.

In Section \ref{expressiveness}, we further study the power of $\LINT$ and $\SATINT$ by showing several problems they can describe and solve and in Section \ref{casestudy} we present a case study based on the elevator algorithm where we show how \setlog can automatically discharge all the invariance lemmas.

\section{\label{language}$\LINT$: a language for sets and integer intervals}


In this section we describe the syntax and semantics of our set-based language
$\LINT$. $\LINT$ is an extension of $\LCARD$. Although $\LCARD$ has been
thoroughly presented elsewhere \cite{DBLP:journals/tplp/CristiaR23} here we
reproduce that presentation with the extensions to integer intervals. Hence,
$\LINT$ is a multi-sorted first-order predicate language with three distinct
sorts: the sort $\sSet$ of all the terms which denote sets, the sort $\sInt$ of
terms denoting integer numbers, and the sort $\sUr$ of all the other terms.
Terms of these sorts are allowed to enter in the formation of set terms (in
this sense, the designated sets are hybrid), no nesting restrictions being
enforced (in particular, membership chains of any finite length can be
modeled). A handful of reserved predicate symbols endowed with a pre-designated
set-theoretic meaning is available. The usual linear integer arithmetic
operators are available as well. Formulas are built in the usual way by using
conjunction, disjunction and negation of atomic predicates. A few more complex
operators (in the form of predicates) are defined as $\LINT$ formulas, thus
making it simpler for the user to write complex formulas.

\subsection{Syntax}\label{syntax}

The syntax of the language is defined primarily by giving the signature upon
which terms and formulas are built.

\begin{definition}[Signature]\label{signature}
The signature $\Sigma_{\INT}$ of $\LINT$ is a triple $\langle
\mathcal{F},\Pi,\Var\rangle$ where:
\begin{itemize}
\item $\mathcal{F}$ is the set of function symbols along with their sorts, partitioned as
      $\mathcal{F} \defs \FSet \cup \FInt \cup \FUr$, where
      $\FSet \defs \{\e,\w,\i\}$, $\FInt = \{0,-1,1,-2,2,\dots\} \cup \{+,-,*\}$
      and $\FUr$ is a set of uninterpreted constant and function symbols.
\item $\Pi$ is the set of predicate symbols along with their sorts, partitioned as
$\Pi \defs \Pi_{=} \cup  \PiSet
\cup \Pi_{\Size} \cup \PiInt$, where $\Pi_{=} \defs  \{=,\neq\}$, $\PiSet \defs
\{\in,\notin,\Cup,\disj\}$, $\Pi_{\Size} \defs \{\Size\}$, and  $\PiInt \defs
\{\leq\}$.
%
\item $\Var$ is a denumerable set of variables partitioned as
$\Var \defs \Var_\Set \cup \Var_\Int \cup \Var_\Ur$.
\qed
\end{itemize}
\end{definition}

Intuitively, $\e$ represents the empty set; $\{x \plus A\}$ represents the set
$\{x\} \cup A$; $[m,n]$ represents the set $\{p \in \num | m \leq p \leq n\}$;
and $\Var_\Set$, $\Var_\Int$ and $\Var_\Ur$ represent sets of variables ranging
over sets, integers and ur-elements, respectively.

Sorts of function and predicate symbols are specified as follows: if $f$
(resp., $\pi$) is a function (resp., a predicate) symbol of arity $n$,
then its sort is an $n+1$-tuple $\langle s_1, \ldots ,s_{n+1} \rangle$ (resp.,
an $n$-tuple $\langle s_1, \ldots ,s_n \rangle$) of non-empty subsets of the
set of sorts $\{ \sSet , \sInt, \sUr \}$. This notion is denoted by $f:\langle
s_1, \ldots ,s_{n+1}\rangle$ (resp., by $\pi:\langle s_1, \ldots ,s_n\rangle
$). Specifically, the sorts of the elements of $\mathcal{F}$ and $\Var$ are the
following.

\begin{definition}[Sorts of function symbols and variables]\label{d:sorts}
The sorts of the symbols in $\mathcal{F}$ are as follows:
\begin{flalign*}
  \quad\quad & \e: \langle \{\sSet \} \rangle & \\
 & \mathsf{\w: \langle \{\sSet , \sInt, \sUr\}, \{ \sSet \} , \{ \sSet\}\rangle } & \\
 & \mathsf{\i: \langle \{\sInt\}, \{\sInt\}, \{ \sSet\}\rangle } & \\
 & c: \langle \{\sInt \} \rangle \text{, for any $c \in \{0,-1,1,-2,2,\dots\}$} & \\
 & \cdot + \cdot, \cdot - \cdot, \cdot * \cdot:
  \langle \{\sInt\}, \{\sInt\} , \{\sInt\}\rangle & \\
 & f: \langle \{\sUr\},\dots ,\{\sUr\} \rangle \in (\{\sUr\})^{n+1}\text{, if $f
\in \FUr$ is of arity $n \ge 0$}. &
 \end{flalign*}
The sorts of variables are as follows:
\begin{flalign*}
  \quad\quad & v: \langle \{\sSet \} \rangle \text{, if $v \in \Var_\Set$} & \\
 & v: \langle \{\sInt \} \rangle \text{, if $v \in \Var_\Int$} & \\
 & v: \langle \{\sUr \} \rangle \text{, if $v \in \Var_\Ur$} & \tag*{\qed}
 \end{flalign*}
\end{definition}

\begin{definition}[Sorts of predicate symbols]\label{d:sorts_pred}
The sorts of the predicate symbols in $\Pi$ are as follows
(symbols $\Cup$ and $\Size$ are prefix; all other symbols in $\Pi$ are infix):
\begin{flalign*}
 \quad\quad & =,\neq: \langle \{\sSet , \sInt, \sUr \}, \{ \sSet , \sInt, \sUr \} \rangle  & \\
 & \in,\notin: \langle \{\sSet, \sInt, \sUr \} , \{\sSet \} \rangle & \\
 & \Cup: \langle \{\sSet \} , \{\sSet \}, \{\sSet \} \rangle & \\
 & \disj: \langle \{\sSet \} , \{\sSet \} \rangle & \\
 & \Size: \langle \{\sSet \} , \{\sInt \} \rangle & \\
 & \leq: \langle \{\sInt \} , \{\sInt \} \rangle & \tag*{\qed}
 \end{flalign*}
\end{definition}

Note that arguments of $=$ and $\neq$ can be of any of the three considered
sorts. We do not have distinct symbols for different sorts, but the
interpretation of $=$ and $\neq$ (see Section \ref{semantics}) depends on the
sorts of their arguments.

The set of admissible (i.e., well-sorted) $\LINT$ terms is defined as follows.

\begin{definition}[$\INT$-terms]\label{LCARD-terms}
The set of \emph{$\INT$-terms}, denoted by $\TINT$, is the minimal subset of
the set of $\Sigma_{\INT}$-terms generated by the following grammar complying
with the sorts as given in Definition \ref{d:sorts}:
\begin{flalign*}
 \quad\quad C     &::=   0 \hspace{2pt}|\hspace{2pt}
          {-1} \hspace{2pt}|\hspace{2pt}
            1 \hspace{2pt}|\hspace{2pt}
          {-2} \hspace{2pt}|\hspace{2pt}
            2 \hspace{2pt}|\hspace{2pt}
           \dots & \\
\TInt &::= C \hspace{2pt}|\hspace{2pt}
          \Var_\Int \hspace{2pt}|\hspace{2pt}
          C * \Var_\Int \hspace{2pt}|\hspace{2pt}
          \Var_\Int * C \hspace{2pt}|\hspace{2pt}
          \TInt + \TInt \hspace{2pt}|\hspace{2pt}
          \TInt - \TInt & \\
\TINT & ::=
  \TInt \hspace{2pt}|\hspace{2pt}
  \TUr \hspace{2pt}|\hspace{2pt} \Var_\Ur \hspace{2pt}|\hspace{2pt}
  \mathit{Set} & \\
\mathit{Set} & ::=
   \q\e\qr
   \hspace{2pt}|\hspace{2pt}
   \Var_S
   \hspace{2pt}|\hspace{2pt}
      \q\{\qr \TINT
           \ql\hspace{-2pt}\plus\hspace{-2pt}\qr \mathit{Set} \ql\}\q
   \hspace{2pt}|\hspace{2pt}
      \q[\qr \TInt
           \ql\hspace{-2pt},\hspace{-2pt}\qr \TInt \ql]\q &
\end{flalign*}
where $\TInt$ (resp., $\TUr$) represents any non-variable $\FInt$-term
(resp., $\FUr$-term). \qed
\end{definition}

As can be seen, the grammar allows only integer linear terms.

If $t$ is a term $f(t_1,\dots,t_n)$, $f \in \mathcal{F}, n \ge 0$, and $\langle
s_1, \ldots ,s_{n+1} \rangle$ is the sort of $f$, then we say that $t$ is of
sort $\langle s_{n+1} \rangle$. The sort of any $\INT$-term $t$ is always
$\langle \{\sSet\}\rangle$ or $\langle \{\sInt\} \rangle$ or $\langle \{\sUr\}
\rangle$. For the sake of simplicity, we say that $t$ is of sort
$\sSet$ or $\sInt$ or $\sUr$, respectively. In particular, we say that a
$\INT$-term of sort $\sSet$ is a {\em set term}, that set terms of the form
$\w$ are \emph{extensional} set terms, and set terms of the form $\i$ are
integer intervals or just intervals. The first parameter of an extensional set
term is called \emph{element part} and the second is called \emph{set part}.
Observe that one can write terms representing sets which are nested at any
level. The parameters of intervals are called \emph{left} and \emph{right
limits}, respectively. It is important to remark that interval limits can be
integer linear terms including variables.

Hereafter, we will use the following notation for extensional set terms:
$\{t_1,t_2,\dots,t_n \plus t\}$ is a shorthand for $\{t_1 \plus \{t_2 \,\plus\,
\cdots \{ t_n \plus t\}\cdots\}\}$, while $\{t_1,t_2,\dots,t_n\}$ is a
shorthand for $\{t_1,t_2,\dots,t_n \plus \e\}$. Moreover, we will use the
following name conventions: $A, B, C, D$
for terms of sort $\sSet$ excluding integer intervals;  $i, j, k, m, p, q$
for terms of sort $\sInt$; $a, b, c, d$ for terms of sort $\sUr$; and $x, y, z$
for terms of any of the three sorts.

\begin{example}[Set terms]
The following $\Sigma_{\INT}$-terms are set terms:
\begin{flalign*}
  \quad\quad & \e &\\
 & \{x \plus A\} & \\
 & \{4+k,f(a,b)\}, \text{ i.e., } \{4+k \plus \{f(a,b) \plus \e \}\},
  \text{ where $f$ is a (uninterpreted) } & \\
 & \text{symbol in $\FUr$} & \\
 & [-3,2*m] &
\end{flalign*}
On the opposite, $\{x \plus 17\}$ is not a set term. \qed
\end{example}


The sets of well-sorted $\LINT$ constraints and formulas are defined as
follows.

\begin{definition}[$\INT$-constraints]\label{primitive-constraint}
If $\pi \in \Pi$ is a predicate symbol of sort $\langle s_1, \ldots , s_n
\rangle$, and for each $i=1,\ldots , n$, $t_i$ is a $\INT$-term of sort
$\langle s'_i \rangle$ with $s'_i \subseteq s_i$, then $\pi (t_1,\ldots ,t_n)$
is a \emph{$\INT$-constraint}. The set of $\INT$-constraints is denoted by
$\CINT$. \qed
\end{definition}

\begin{example}
If $k,m \in \Var_\Int$ and $A \in \Var_\Set$, then $[k,m] = \{3
\plus A\}$ is a $\INT$-constraint but $[k,\e] = \{3 \plus A\}$ is not, as
$[k,\e]$ is not a $\INT$-term because $\e$ does not belong to $\TInt$.
\qed
\end{example}

$\INT$-constraints whose arguments are of sort $\sSet$ (including
$\Size$-constraints) will be called \emph{set constraints}; $\INT$-constraints
whose arguments are of sort $\sInt$ will be called \emph{integer constraints}.

\begin{definition}[$\INT$-formulas]\label{formula}
The set of $\INT$-formulas, denoted by $\FINT$, is given
by the following grammar:
\begin{flalign*}
 \quad\quad & \FINT ::=
  \true \mid \false \mid \CINT \mid \FINT \land \FINT \mid \FINT \lor
  \FINT &
\end{flalign*}
where $\CINT$ represents any element belonging to the set of
$\INT$-constraints. \qed
\end{definition}

\begin{example}[$\INT$-formulas]\label{ex:formulas}
The following are $\INT$-formulas:
\begin{flalign*}
 \quad\quad & a \in [k,i] \land a \notin B \land \Cup([m,k+1],B,C) \land C = \{x \plus D\} & \\
 & \Cup(A,B,[k,j]) \land n+k > 5 \land \Size([k,j],n) \land B \neq \e &
\end{flalign*}
On the contrary, $\Cup(A,B,23)$ is not a $\INT$-formula because $\Cup(A,B,23)$
is not a $\INT$-constraint ($23$ is not of sort $\sSet$ as required by the
sort of $\Cup$). \qed
\end{example}

\begin{remark}
$\LCARD$ coincides with $\LINT$ without integer
interval terms. That is, if the function symbol $\i$ is removed from $\FSet$
(and consequently from the other definitions of the elements of the language),
we get $\LCARD$. \qed
\end{remark}

\subsection{\label{semantics}Semantics}

Sorts and symbols in $\Sigma_{\INT}$ are interpreted according to the
interpretation structure $\iS \defs \langle D,\iF{\cdot}\rangle$, where $D$ and
$\iF{\cdot}$ are defined as follows.

\begin{definition} [Interpretation domain] \label{def:int_dom}
The interpretation domain $D$ is partitioned as $D \defs D_\sSet \cup D_\sInt
\cup D_\sUr$ where:
\begin{itemize}
\item $D_\sSet$ is the set of all hereditarily finite hybrid
sets built from elements in $D$. Hereditarily finite sets are those sets that
admit (hereditarily finite) sets as their elements, that is sets of sets.
\item $D_\sInt$ is the set of integer numbers, $\mathbb{Z}$.
\item $D_\sUr$ is a collection of other objects. \qed
\end{itemize}
\end{definition}

\begin{definition} [Interpretation function] \label{app:def:int_funct}
The interpretation function $\iF{\cdot}$ is defined as follows:
\begin{itemize}
\item Each sort $\mathsf{X} \in \{\sSet,\sInt,\sUr\}$ is mapped to
      the domain $D_\mathsf{X}$.

\item The constant and function symbols in $\mathcal{F}_\Set$ are
interpreted as follows:
  \begin{itemize}
  \item $\e$ is interpreted as the empty set
  \item $\{ x \plus A \}$ is interpreted as the set $\{x^\iS\} \cup A^\iS$
  \item $[k,m]$ is interpreted as the set $\{p \in \num | k^\iS \leq p \leq m^\iS\}$
  \end{itemize}

\item The constant and function symbols in $\mathcal{F}_\Int$ are
interpreted as follows:
\begin{itemize}
\item Each element in \{0,-1,1,-2,2,\dots\} is interpreted as the
corresponding integer number
\item $i + j$ is interpreted as $i^\iS + j^\iS$
\item $i - j$ is interpreted as $i^\iS - j^\iS$
\item $i * j$ is interpreted as $i^\iS * j^\iS$
\end{itemize}

\item The predicate symbols in $\Pi$ are interpreted as follows:
  \begin{itemize}
   \item $x = y$, where $x$ and $y$ have the same sort $\mathsf{X}$,
   $\mathsf{X} \in \{\sSet,\sInt,\sUr\}$, is interpreted as the
   identity between $x^\iS$ and $y^\iS$ in $D_\mathsf{X}$;
   conversely, if $x$ and $y$ have different sorts, $x = y$ is interpreted in
   such a way as its truth value will be always $\false$
   \item $x \in A$ is interpreted as $x^\iS \in A^\iS$
   \item $\Cup(A,B,C)$ is interpreted as $C^\iS = A^\iS \cup B^\iS$
   \item $A \disj B$ is interpreted as $A^\iS \cap B^\iS = \emptyset$
   \item $\Size(A,k)$ is interpreted as $\card{A^\iS} = k^\iS$
   \item $i \leq j$ is interpreted as $i^\iS \leq j^\iS$
   \item $x \neq y$ and $x \notin A$ are
   interpreted as $\lnot x = y$ and $\lnot x \in A$, respectively.
\qed
\end{itemize}
\end{itemize}
\end{definition}

Note that integer intervals in $\LINT$ denote always finite sets given that
their limits can assume only integer values---in other words, integer limits
cannot be $\pm\infty$.

The interpretation structure $\iS$ is used to evaluate each $\INT$-formula
$\Phi$ into a truth value $\Phi^\iS = \{\true,\false\}$ in the following way:
set constraints (resp., integer constraints) are evaluated by $\iF{\cdot}$
according to the meaning of the corresponding predicates in set theory (resp.,
in number theory) as defined above; $\INT$-formulas are evaluated by
$\iF{\cdot}$ according to the rules of propositional logic. In particular,
observe that equality between two set terms is interpreted as the equality in
$D_\sSet$; that is, as set equality between hereditarily finite hybrid sets.
Such equality is regulated by the standard \emph{extensionality axiom}, which
has been proved to be equivalent, for hereditarily finite sets, to the
following equational axioms \cite{Dovier00}:
\begin{gather}
\{x, x \plus A\} = \{x \plus A\} \tag{$Ab$} \label{Ab} \\
\{x, y \plus A\} = \{y, x \plus A\} \tag{$C\ell$} \label{Cl}
\end{gather}
Axiom \eqref{Ab} states that duplicates in a set term do not matter
(\emph{Absorption property}). Axiom \eqref{Cl} states that the order of
elements in a set term is irrelevant (\emph{Commutativity on the left}). These
two properties capture the intuitive idea that, for instance, the set terms
$\{1,2\}$, $\{2,1\}$, and $\{1,2,1\}$ all denote the same set.

A \emph{valuation} $\sigma$ of a formula $\Phi$ is an assignment of
values from $\mathcal{D}$ to the free variables of $\Phi$ which respects the sorts
of the variables. $\sigma$ can be extended to terms in a straightforward
manner. In the case of formulas, we write $\Phi[\sigma]$ to denote the
application of a valuation to a formula $\Phi$. $\sigma$ is a \emph{successful
valuation} (or, simply, a \emph{solution}) if $\Phi[\sigma]$ is true in $\iS$.

\subsection{\label{derived}{Derived Constraints}}

$\LINT$ can be extended to support other set and integer operators definable by
means of suitable $\LINT$ formulas.

Dovier et al. \cite{Dovier00} proved that the collection of predicate symbols
in $\Pi_{=} \cup \PiSet$ is sufficient to define constraints implementing the
set operators $\cap$, $\subseteq$ and $\setminus$. For example, $A \subseteq B$
can be defined by the $\LINT$ formula $\Cup(A,B,B)$. In a similar fashion,
$\{=,\neq\} \cup \PiInt$ is sufficient to define $<$, $>$ and $\geq$. With a
little abuse of terminology, we say that the set  and integer predicates that
are given as $\INT$-formulas are \emph{derived constraints}. In Section
\ref{expressiveness}, we introduce more complex derived constraints that can be
written only when integer intervals are available.

Whenever a formula contains a derived constraint, the constraint is replaced by
its definition turning the given formula into a $\LINT$ formula. Precisely, if
formula $\Phi$ is the definition of constraint $c$, then $c$ is replaced by
$\Phi$ and the solver checks satisfiability of $\Phi$ to determine
satisfiability of $c$. Thus, we can completely ignore the presence of derived
constraints in the subsequent discussion about constraint solving and formal
properties of our solver.


The negated versions of set and integer operators can be introduced as derived
constraints, as well. The derived constraint for $\lnot\cup$ and
$\lnot\disj$ (called $\Ncup$ and $\Ndisj$, respectively) are shown in
\cite{Dovier00}.
For example, $\lnot(A \cup B = C)$ is introduced as:
\begin{equation}\label{e:nun}
\Ncup(A,B,C) \defs
    (n \in C \land n \notin A \land n \notin B)
     \lor (n \in A \land n \notin C)
     \lor (n \in B \land n \notin C)
\end{equation}
With a
little abuse of terminology, we will refer to these predicates as
\emph{negative constraints}.

Thanks to the availability of negative constraints, (general) logical negation
is not strictly necessary in $\LINT$.

Now that we have derived and negative constraints it is easy to see that
$\LINT$ expresses the Boolean algebra of sets with cardinality extended with
integer intervals. That is, one can write Boolean formulas where arguments are
extensional sets and integer intervals.

\begin{remark}
From now on, we will use $A \subseteq B$ as a synonym of the $\LINT$
constraint $\Cup(A,B,B)$. In particular we will write $X \subseteq [k,m]$ in
place of $\Cup(X,[k,m],[k,m])$. \qed
\end{remark}

\begin{remark}[A stack of constraint languages and solvers]
As we have said, $\LINT$ and $\SATINT$ are built on top of $\LCARD$ and
$\SATCARD$. In turn, $\LCARD$ and $\SATCARD$ are built on top of \CLPSET
\cite{Dovier00}. \CLPSET is based on a constraint language including
$\mathcal{F}_\Set$ and $\PiSet$; formulas in \CLPSET
are built as in $\LCARD$. One of the main concepts behind  \CLPSET is \emph{set
unification} \cite{Dovier2006}. $\LCARD$ effectively extends \CLPSET by
introducing $\Size$ and integer constraints; and $\LINT$ effectively extends
$\LCARD$ by admitting integer intervals. Set unification goes all the way up to
$\SATINT$; it is also pervasive in other CLP schemas developed by the authors
\cite{DBLP:journals/jar/CristiaR20,DBLP:journals/jar/CristiaR21a}. \setlog implements $\LCARD$
and $\SATCARD$, while the extension of $\setlog$ to implement $\LINT$ is
presented in this paper. \qed
\end{remark}

\section{\label{satcard}$\SATINT$: a constraint solving procedure for $\LINT$}


In this section, we show how $\SATCARD$ can be further extended to support set constraints  whose arguments can be integer intervals.
The resulting constraint solving procedure, i.e., $\SATINT$, is a decision procedure for $\LINT$ formulas.
Furthermore, it produces a finite representation of all the possible solutions of any satisfiable $\LINT$ formula (see Section \ref{decproc}).

\subsection{The solver}

The overall organization of $\SATINT$ is shown in Algorithm \ref{glob}.
Basically, $\SATINT$ uses four routines: \textsf{gen\_size\_leq},
$\mathsf{STEP_{S\INT}}$ (called from $\mathsf{step\_loop}$\footnote{As
$\mathsf{step\_loop}$ merely loops  calling $\mathsf{STEP_{S\INT}}$, we will
talk about the latter rather than the former. $\mathsf{STEP_{S\INT}}$ is the
key procedure in Algorithm \ref{glob}.}), \textsf{remove\_neq} and $\STEPCARD$.

\newcommand{\minsol}{\mathsf{check\_all\_minsol}}

\begin{algorithm}
\begin{algorithmic}[0]
 \State $\Phi \gets \textsf{gen\_size\_leq}(\Phi)$;
 \Repeat
   \State $\Phi' \gets \Phi$;
   \State $\Phi \gets \textsf{remove\_neq}(\mathsf{step\_loop}(\Phi))$
 \Until{$\Phi = \Phi'$; \hfill{\footnotesize[end of main loop]}}
 \State \textbf{let} $\Phi$ \textbf{be} $\Phi_{\CARD}
      \land \Phi_{\subseteq\INT}$;
 \State \textbf{let} $\Phi_{\CARD}$ \textbf{be} $\Phi_1 \land \Phi_2$;
 \If{$\Phi_{\subseteq\INT} \neq \true$}
   \State\Return{$\minsol(\Phi_1,\Phi_2,\Phi_{\subseteq\INT})$}
 \Else
 \State\Return{$\STEPCARD(\Phi_1) \land \Phi_2$}
 \EndIf
\vspace{2mm}
\Procedure{$\mathsf{step\_loop}$}{$\Phi$}
   \Repeat
     \State $\Phi' \gets \Phi$;
     \State $\Phi \gets \mathsf{STEP_{S\INT}}(\Phi)$
     \hfill{\footnotesize[$\mathsf{STEP_{S\INT}}$ is a key procedure]}
   \Until{$\Phi = \Phi'$}
 \State\Return{$\Phi$}
\EndProcedure
\vspace{2mm}
\Procedure{$\minsol$}{$\Phi_1,\Phi_2,\Phi_{\subseteq\INT}$}
 \If{$\STEPCARD(\Phi_1,Min)$}
 \For{$m_1 = c_1, \dots, m_k=c_k$ solution of $0 \leq m_1 \leq Min \land \dots \land 0 \leq m_k \leq Min \land Min = \sum_{i=1}^k m_i$}
    \If{$\mathsf{step\_loop}(\Phi_1 \land \Phi_2 \land \Phi_{\subseteq\INT} \land m_1 = c_1 \land \dots \land m_k=c_k) \neq \false$}
       \State\Return{$\Phi_1 \land \Phi_2 \land \Phi_{\subseteq\INT}$}
    \EndIf
    \EndFor
    \State\Return{$\false$}
 \Else
    \State\Return{$\false$}
 \EndIf
\EndProcedure
\end{algorithmic}
\caption{The solver $\SATINT$. $\Phi$ is the input formula.} \label{glob}
\end{algorithm}

\textsf{gen\_size\_leq} simply adds integer constraints to the input formula $\Phi$ to force the second argument of $\Size$-constraints in $\Phi$ to be non-negative integers.
$\mathsf{STEP_{S\INT}}$ includes the constraint solving procedures for the $\LCARD$ fragment as well as the new constraint solving procedures for set constraints whose arguments are intervals.
$\mathsf{STEP_{S\INT}}$ applies specialized rewriting procedures to the current formula $\Phi$ and returns either $\false$ or the modified formula. Each rewriting procedure applies a few non-deterministic rewrite rules which reduce the syntactic complexity of $\INT$-constraints of one kind.
\textsf{remove\_neq} deals with the elimination of $\neq$-constraints involving set variables.
Its motivation and definition will be made evident later in Section \ref{inequalities}.
$\STEPCARD$ is the adaptation of the decision procedure proposed by C. Zarba for cardinality constraints \cite{DBLP:conf/frocos/Zarba02} to our CLP framework.
Both $\mathsf{STEP_{S\INT}}$ and $\STEPCARD$ use the SWI-Prolog CLP(Q) library to solve linear integer arithmetic problems \cite{holzbaur1995ofai}.
These problems may be part of $\Phi$ or they are generated during set processing.
Besides, $\STEPCARD$ uses a SAT solver implemented in Prolog by Howe and King \cite{DBLP:journals/tcs/HoweK12} to help in the implementation of Zarba's algorithm.
$\minsol$ basically iterates over all the minimum solutions. The precise implementation of this procedure is discussed in Section \ref{after}.

The execution of $\mathsf{STEP_{S\INT}}$ and \textsf{remove\_neq} is iterated
until a fixpoint is reached, i.e., the formula is irreducible. These routines
return $\false$ whenever (at least) one of the procedures in it rewrites $\Phi$
to $\false$. In this case, a fixpoint is immediately detected.

As we will show in Section \ref{decproc}, when all the non-deterministic
computations of $\SATINT(\Phi)$ return $\false$, then we can conclude that
$\Phi$ is unsatisfiable; otherwise, when at least one of  them does not return
$\false$, then we can conclude that $\Phi$ is satisfiable and each solution of
the formulas returned by $\SATINT$ is a solution of $\Phi$, and vice versa.

Apart from the new rewrite rules, $\SATINT$ works exactly as $\SATCARD$ up until the end of the main loop.
After the main loop, $\SATINT$ differs from $\SATCARD$ in:
\begin{enumerate*}
\item dividing $\Phi$ into $\Phi_{\CARD}$ and $\Phi_{\subseteq[\,]}$; and
\item adding the \textbf{then} branch.
\end{enumerate*}
$\Phi_{\CARD}$ is a $\LCARD$ formula; and $\Phi_{\subseteq[\,]}$ is a conjunction of constraints of the form $X \subseteq [p,q]$ where $[p,q]$ is a variable-interval.
In turn, $\Phi_{\CARD}$ is divided into $\Phi_1$ and $\Phi_2$ as in $\SATCARD$: $\Phi_1$ contains all the integer constraints and all the $\Cup$, $\disj$ and $\Size$ constraints, and $\Phi_2$ is the rest of $\Phi_{\CARD}$ (i.e., $\notin$-constraints, and $=$ and $\neq$ constraints not involving integer terms).
If $\Phi_{\subseteq[\,]}$ is $\true$, $\SATINT$ executes the \textbf{else} branch which corresponds to the implementation of $\SATCARD$.
This means that when in $\Phi$ there are no integer intervals, $\SATINT$ reduces to $\SATCARD$.
The \textbf{then} branch is exclusive of $\SATINT$ and is entered only if at this point a constraint of the form $X \subseteq [p,q]$ is still present in $\Phi$.
However, $\Phi_{\subseteq[\,]}$ is not passed in to $\STEPCARD$, because Zarba's algorithm cannot deal with integer intervals, but instead $\minsol$ is called to make a final satisfiability judgment as is explained in Section \ref{after}.

\subsection{Rewrite rules of \CLPSET and $\SATCARD$}

The rewrite rules used by $\SATINT$ are defined as follows.

\begin{definition}[Rewrite rules]\label{d:rw_rules}
If $\pi$ is a symbol in $\Pi$ and $\phi$ is a $\INT$-constraint based on $\pi$,
then a \emph{rewrite rule for $\pi$-constraints} is a rule of the form $\phi
\lfun \Phi_1 \lor \dots \lor \Phi_n$, where $\Phi_i$, $1 \leq i \leq n$, are
$\INT$-formulas. Each $\Sigma_{\INT}$-predicate matching $\phi$ is
non-deterministically rewritten to one of the $\Phi_i$. Variables appearing in
the right-hand side but not in the left-hand side are assumed to be fresh
variables, implicitly existentially quantified over each $\Phi_i$. \qed
\end{definition}

A \emph{rewriting procedure} for $\pi$-constraints consists of the collection
of all the rewrite rules for $\pi$-constraints. The first
rule whose left-hand side matches the input $\pi$-constraint is used to rewrite
it.
Constraints that are rewritten by no rule are called \emph{irreducible}.
Irreducible constraints are part of the final answer of $\mathsf{STEP_{S\INT}}$
(see Definition \ref{def:solved}).

The following conventions are used throughout the rules. $\dot x$, for any name
$x$, is a shorthand for $x \in \Var$, i.e., $\dot x$ represents a variable. In
particular, variable names $\dot n$, $\dot n_i$, $\dot N$ and $\dot{N_i}$
denote fresh variables of sort $\sInt$ and $\sSet$, respectively.

Before introducing the new rewrite rules added to $\SATINT$ we show some of the
key rewrite rules inherited from \CLPSET and $\SATCARD$ (see Figure
\ref{f:clpset}). Without the new rewrite rules, $\SATINT$ can deal with $\LINT$
formulas as long as integer intervals are not present.

Rule \eqref{eq:ext} is
the main rule of set unification. It states when two non-empty, non-variable
sets are equal by non-deterministically and recursively computing four cases.
These cases implement the \eqref{Ab} and \eqref{Cl} axioms shown in Section
\ref{semantics}. As an example, by applying rule \eqref{eq:ext} to $\{1\} =
\{1,1\}$ we get: ($1 = 1 \land \e = \{1\}) \lor (1 = 1 \land \{1\} = \{1\})
\lor (1 = 1 \land  \e = \{1,1\}) \lor (\e = \{1 \plus \dot N\} \land  \{1 \plus
\dot N\} = \{1\})$, which turns out to be true (due to the second disjunct).

\begin{figure}
\hrule
\begin{flalign}
& \{x  \plus{} A\} = \{y \plus B\} \lfun \label{eq:ext} & \\
& \qquad  (x = y \land A = B)
   \lor (x = y \land \{x \plus A\} = B)
   \lor (x = y \land A = \{y \plus B\})
   \lor (A = \{y \plus \dot N\} \land \{x \plus \dot N\} = B) & \notag \\[2mm]
& \Cup(\{x \plus C\}, A, \dot{B}) \lfun & \label{un:ext1} \\
  & \qquad  \{x \plus C\} = \{x \plus \dot{N_1}\}
      \land x \notin \dot{N_1} \land \dot{B} = \{x \plus \dot N\} & \notag \\
  & \qquad \land (x \notin A \land \Cup(\dot{N_1}, A, \dot N)
  \lor A = \{x \plus \dot{N_2}\}
                 \land x \notin \dot{N_2} \land \Cup(\dot{N_1}, \dot{N_2}, \dot
                 N)) & \notag \\[2mm]
& \Size(\{x \plus A\},m) \lfun
    (x \notin A \land m = 1 + \dot n \land \Size(A,\dot n) \land 0 \leq \dot{n})
   \lor (A = \{x \plus \dot N\} \land x \notin \dot N \land \Size(A,m))
       \label{size:ext}
\end{flalign}
 \hrule
 \caption{\label{f:clpset}Some key rewrite rules inherited from \CLPSET and $\SATCARD$}
\end{figure}

In turn, rule \eqref{un:ext1} is one of the main rules for $\Cup$-constraints.
It deals with $\Cup$-constraints where the first argument is an extensional set
and the last one a variable. Observe that this rule is based on set unification
(i.e., on rule \eqref{eq:ext}). It computes two cases: $x$ does not belong to
$A$, and $x$ belongs to $A$ (in which case $A$ is of the form $\{x \plus
\dot{N_2}\}$ for some set $\dot{N_2}$). In the latter, $x \notin \dot{N_2}$
prevents Algorithm \ref{glob} from generating infinite terms denoting the same
set. The rest of the rewrite rules of \CLPSET can be found in \cite{Dovier00}
and online \cite{calculusBR}.

One of the rewrite rules concerning $\Size$-constraints implemented in
$\SATCARD$ is rule \eqref{size:ext}. It computes the size of any extensional
set by counting the elements that belong to it while taking care of avoiding
duplicates. This means that, for instance, the first non-deterministic choice
for a formula such as $\Size(\{1,2,3,1,4\},m)$ will be:
\[
1 \notin \{2,3,1,4\}
  \land m = 1 + \dot{n} \land \Size(\{2,3,1,4\},\dot{n}) \land 0 \leq \dot{n}
\]
which will eventually lead to a failure due to the presence of  $1 \notin
\{2,3,1,4\}$. This implies that $1$ will be counted
in its second occurrence. Besides, the second choice becomes
$\Size(\{2,3,1,4\},m)$ which is correct given that $\card{\{1,2,3,1,4\}} =
\card{\{2,3,1,4\}}$.

Part of the work of extending $\SATINT$ to integer intervals is to make rewrite
rules such as those shown in Figure \ref{f:clpset} to correctly deal with the
same constraints but when at least one of their arguments is an integer
interval. This is shown in sections \ref{equality}-\ref{sec:size}; and in
Section \ref{discussion} we briefly discuss the new rewrite rules.

\subsection{\label{equality}Rules for $=$-constraints}

The main rewrite rules for $=$-constraints are listed in Figure \ref{f:inteq}.
Rule \eqref{eq:e} is straightforward. Rule \eqref{e:ext} is based on the
identity \eqref{eq:id}. Hence, the rule decides the satisfiability of $[k,m] =
\{y \plus B\}$ by deciding the satisfiability of $\{y \plus B\} \subseteq [k,m]
\land \Size(\{y \plus B\},m-k+1)$. This might seem odd but, due to
\eqref{eq:id}, we know that $\{y \plus B\}$ is actually the interval $[k,m]$.
Observe that \eqref{eq:id} is correctly applied as $k \leq m$ is implicit
because $\{y \plus B\}$ is a non-empty set. In fact, if $m < k$ then $\{y \plus
B\} \subseteq [k,m]$ will fail (see Section \ref{un}). We will discuss the
intuition behind identity \eqref{eq:id} in Section \ref{discussion}.

Rule \eqref{eq:int} takes care of whether the intervals denote the empty set
or not; if not, their corresponding limits must be equal. Rules \eqref{neq:e}
and \eqref{neq:int} are the negations of rules \eqref{eq:e} and \eqref{eq:int},
respectively---rule \eqref{neq:int} includes some Boolean simplification.
Instead of negating rule \eqref{e:ext}, rule \eqref{neq:ext} uses set
extensionality to find out whether or not both sets are different; this is so
for efficiency reasons.

Rules for $=$-constraints where the interval is at the right-hand side have not
been included in the figure since they can be trivially obtained from
those shown in Figure \ref{f:inteq}.

Besides, note that an equality of the form $\dot{A} = [k,m]$ is not rewritten
as a solution for variable $A$ has been found.

\begin{figure}
\hrule
\begin{flalign}
& [k,m] = \e \lfun m < k & \label{eq:e} \\[2mm]
& [k,m] = \{y \plus B\} \lfun \{y \plus B\} \subseteq [k,m] \land
   \Size(\{y \plus B\},m-k+1) & \label{e:ext} \\[2mm]
& [k,m] = [i,j] \lfun
   (k \leq m \land i \leq j \land k = i \land m = j)
   \lor (m < k \land j < i) & \label{eq:int} \\[2mm]
& [k,m] \neq \e \lfun k \leq m & \label{neq:e} \\[2mm]
& [k,m] \neq \{y \plus B\} \lfun
 (\dot{n} \in [k,m] \land \dot{n} \notin \{y \plus B\})
  \lor (\dot{n} \notin [k,m] \land \dot{n} \in \{y \plus B\}) & \label{neq:ext}  \\[2mm]
& [k,m] \neq [i,j] \lfun
 (k \leq m \land (m \neq j \lor j < i \lor k \neq i))
        \lor (i \leq j \land (m \neq j \lor m < k \lor k \neq i)) &
        \label{neq:int}
\end{flalign}
 \hrule
 \caption{\label{f:inteq}Rewrite rules for $=$-constraints involving intervals}
\end{figure}

\subsection{Rules for $\in$-constraints}

The rewrite rules for $\in$-constraints are listed in Figure \ref{f:intin}.
Note that all $\in$-constraints involving integer intervals are rewritten into
integer constraints.

\begin{figure}
\hrule
\begin{flalign}
& x \in [k,m] \lfun k \leq x \leq m & \label{in} \\[2mm]
& x \notin [k,m] \lfun x < k \lor m < x & \label{nin}
\end{flalign}
 \hrule
 \caption{\label{f:intin}Rewrite rules for $\in$-constraints involving intervals}
\end{figure}

\subsection{Rules for $\disj$-constraints}

The main rewrite rules for $\disj$-constraints are listed in Figure
\ref{f:intdisj}. Rule \eqref{disj:ext} uses the identity \eqref{eq:id}; note
that in this case a new variable is introduced. Then, this rule decides the
satisfiability of $\dot{N} \disj A$, instead of $[k,m] \disj A$, because we
know that $\dot{N} = [k,m]$.

Rule \eqref{disj:int} considers all the possible cases when two intervals can
be disjoint. For example, the first case corresponds to the left-hand interval
being the empty set; while the third considers the two cases when both are
non-empty intervals but they are one after the other in the $\num$ line.

Rule \eqref{ndisj:all} uses a general criterion to decide the satisfiability of
$\Ndisj$-constraints by asking for a new variable ($\dot{n}$) to be an
element of both sets. In this way, the $\Ndisj$-constraint is rewritten into
two $\in$-constraints where the first one is dealt with by the
first rule of Figure \ref{f:intin} and the second one by rules introduced
elsewhere \cite{Dovier00}.

Rules for $\disj$-constraints where the interval is at the right-hand side have
not been included in the figure.

\begin{figure}
\hrule
\begin{flalign}
& [k,m] \disj \e \lfun \true & \label{disj:e} \\[2mm]
& \text{If $A$ is a set variable or an extensional set:} & \label{disj:ext} \\
& [k,m] \disj A \lfun
 m < k \lor
    (k \leq m \land \dot{N} \subseteq [k,m] \land \Size(\dot{N},m-k+1) \land
    \dot{N} \disj A) & \notag \\[2mm]
& [k,m] \disj [i,j] \lfun m < k \lor j < i \lor
    (k \leq m \land i \leq j \land (m < i \lor j < k)) & \label{disj:int} \\[2mm]
& [k,m] \Ndisj A \lfun \dot{n} \in [k,m] \land \dot{n} \in A & \label{ndisj:all}
\end{flalign}
 \hrule
 \caption{\label{f:intdisj}Rewrite rules for $\disj$-constraints involving intervals}
\end{figure}

\subsection{\label{un}Rules for $\Cup$-constraints}

The main rewrite rules for $\Cup$-constraints are listed in Figure
\ref{f:intun} and more rules can be found in Appendix \ref{app:rulesun}. In
these figures, $A$ and $B$ represent either variables or extensional set terms;
$C$ represent any set term (including intervals).

Rules \eqref{un:subsete}-\eqref{un:subsetext} are special cases as they,
actually, implement the subset relation (recall that we are using $A \subseteq
B$ just as a synonym of the $\LINT$ constraint $\Cup(A,B,B)$). The first of
these rules is trivial. The second one states that a constraint of the form
$\dot{A} \subseteq [k,m]$ is left unchanged. This is so because these
constraints are always satisfiable by substituting $\dot{A}$ by the empty set.
The importance of this property will be evident in Section \ref{decproc}. The
third rule walks over all the elements of an extensional set until the empty
set or a variable is found. In each step two integer constraints are generated.
Note that when the recursion arrives at the end of the set, rules
\eqref{un:subsete} or \eqref{un:subsetvar} are applied and a fixpoint is
reached.

Rule \eqref{un:3} is based on the identity \eqref{eq:id}. This rule is crucial
as it permits to reconstruct an integer interval from two sets. For instance, this rule covers constraints such as $\Cup(\{x \plus
\dot{A}\},\dot{B},[\dot{k},\dot{m}])$. Note that
the interval disappears from the $\Cup$-constraint. Rule \eqref{un:12} is based
on \eqref{eq:id} as well. In this case we apply the identity twice to transform
the intervals in the $\Cup$-constraint into extensional sets.

Although in rule \eqref{un:123} it would be possible to use \eqref{eq:id}, it
is more efficient to rely on the fact that the union of two integer intervals
is equal to an interval if some linear integer arithmetic conditions hold. As
far as we understand, the same approach cannot be used in the other rules. A
particular case that is true only in $\num$ is considered in the rows labeled
[\textsf{in Z}]. Indeed, for instance, $i \leq m+1$ takes care of a case such
as $\Cup([2,m],[m+1,10],[2,10])$ which does not hold outside of $\num$ as there
are infinite numbers between $m$ and $m+1$.

\begin{figure}
\hrule
\begin{flalign}
& \e \subseteq [k,m] \lfun \true \label{un:subsete} & \\[2mm]
& \dot{A} \subseteq [k,m] \lfun \dot{A} \subseteq [k,m] \text{ (irreducible constraint)} \label{un:subsetvar} & \\[2mm]
& \{y \plus C\} \subseteq [k,m] \lfun k \leq y \leq m \land C \subseteq [k,m]
\label{un:subsetext} & \\[2mm]
& \Cup(A,B,[k,m]) \lfun
 m < k \land A = \e \land B = \e
 \lor (k \leq m \land \dot{N} \subseteq [k,m] \land \Size(\dot{N},m-k+1) \land \Cup(A,B,\dot{N}))  \label{un:3} \\[2mm]
& \Cup([k,m],[i,j],A) \lfun & \label{un:12} \\
& \quad (m < k \land j < i \land A = \e) & \notag \\
& \quad \lor (m < k \land i \leq j \land [i,j] = A) & \notag \\
& \quad \lor (k \leq m \land j < i \land [k,m] = A) & \notag \\
& \quad \lor (k \leq m \land i \leq j
 \land \dot{N}_1 \subseteq [k,m] \land \Size(\dot{N}_1,m-k+1)
 \land \dot{N}_2 \subseteq [i,j] \land \Size(\dot{N}_2,j-i+1)
 \land \Cup(\dot{N}_1,\dot{N}_2,A)) & \notag \\[2mm]
& \Cup([k,m],[i,j],[p,q]) \lfun & \label{un:123} \\
& \quad (m < k \land [i,j] = [p,q]) & \notag \\
& \quad \lor (j < i \land [k,m] = [p,q]) & \notag \\
& \quad \lor (k \leq m \land i \leq j \land k \leq i \land i \leq m+1 \land m \leq j \land p = k \land q = j) & \tag*{[\textsf{in $\num$}]} \\
& \quad \lor (k \leq m \land i \leq j \land k \leq i \land i \leq m+1 \land j < m \land p = k \land q = m) & \notag \\
& \quad \lor (k \leq m \land i \leq j \land i < k \land k \leq j+1 \land m \leq j \land p = i \land q = j) & \tag*{[\textsf{in $\num$}]} \\
& \quad \lor (k \leq m \land i \leq j \land i < k \land k \leq j+1 \land j < m
              \land p = i \land q = m) & \notag
\end{flalign}
 \hrule
 \caption{\label{f:intun}Rewrite rules for $\Cup$-constraints involving intervals}
\end{figure}

Rules for $\Cup$-constraints where $A$ or $B$ are the empty set have not been
included in Figure \ref{f:intun} nor in Appendix \ref{app:rulesun}.

\subsection{\label{sec:size}Rule for $\Size$-constraints}

The rewrite rules for $\Size$-constraints are listed in Figure \ref{f:intsize}.
Note that all $\Size$-constraints involving integer intervals are rewritten
into integer constraints.

\begin{figure}
\hrule
\begin{flalign}
& \Size([k,m],p) \lfun  (m < k \land p = 0) \lor (k \leq m \land p = m-k+1)
& \label{size:size} \\[2mm]
& \Nsize([k,m],p) \lfun  (m < k \land p \neq 0) \lor (k \leq m \land p \neq
m-k+1) & \label{size:nsize}
\end{flalign}
 \hrule
 \caption{\label{f:intsize}Rewrite rules for $\Size$-constraints involving intervals}
\end{figure}

\begin{remark}
The rewrite rules shown in Sections \ref{equality}-\ref{sec:size} can be used
both for variable-intervals and for non-variable-intervals. However, at
implementation level, it is convenient to rewrite some non-variable-intervals
into extensional sets as soon as possible.
\end{remark}

\subsection{\label{discussion}Discussion}

As can be seen by inspecting the rewrite rules, after $\mathsf{step\_loop}$
terminates, integer intervals remain only in constraints of the form $X
\subseteq [p,q]$ where $X$ is a variable and $[p,q]$ is a variable-interval
(i.e., either $p$ or $q$ are variables). Besides, these constraints remain
irreducible (i.e., there are no rewrite rules dealing with them). This makes
the formula returned by the main loop of Algorithm \ref{glob} very similar to
formulas returned at the same point by $\SATCARD$
\cite{DBLP:journals/tplp/CristiaR23}. Precisely, the only difference is the
presence of constraints of the form $X \subseteq [p,q]$.

Note that the new rewrite rules added to $\mathsf{STEP_{S\INT}}$ rewrite set
constraints into integer constraints, whenever possible. We do so because, in
general,  linear integer arithmetic formulas can be solved much more
efficiently than set formulas. When that is impossible we use \eqref{eq:id} to
somewhat trick the solver making it to solve what is, essentially, a $\LCARD$
problem.

Intuitively, \eqref{eq:id} forces an interval to become an extensional set.
Let us see this by applying rule \eqref{un:3} to  an example.

\begin{example}
Consider the following constraint:
\[
\Cup(\{3,x,1\},\{y,5\},[k,m])
\]
which is rewritten by rule \eqref{un:3} into:
\[
N \subseteq [k,m] \land \Size(N,m-k+1) \land \Cup(\{3,x,1\},\{y,5\},N)
\]
where $N$ is intended to be equal to $[k,m]$. At this point rule
\eqref{un:ext1} rewrites the $\Cup$-constraint yielding: $N = \{3,x,1,y,5\}$.
See that $N$ is now an extensional set instead of an interval. Then,
$N$ is substituted by $\{3,x,1,y,5\}$ in the rest of the formula:
\[
\{3,x,1,y,5\} \subseteq [k,m] \land \Size(\{3,x,1,y,5\},m-k+1)
\]
Now, rule \eqref{un:subsetext} is applied several times yielding:
\begin{align*}
& k \leq 3 \leq m
\land k \leq x \leq m
\land k \leq 1 \leq m
\land k \leq y \leq m
\land k \leq 5 \leq m \land \Size(\{3,x,1,y,5\},m-k+1)
\end{align*}
Rule \eqref{size:ext} is applied to the $\Size$-constraint opening several
non-deterministic choices as $\card{\{3,x,1,y,5\}} \in \{3 ,4,5\}$ depending on
the values of $x$ and $y$. In this case, all these choices are encoded as
integer problems.  For instance, when rule \eqref{size:ext} considers the
alternative where $\card{\{3,x,1,y,5\}} = 5$ the formula to solve becomes:
\begin{align*}
& k \leq 3 \leq m
\land k \leq x \leq m
\land k \leq 1 \leq m
\land k \leq y \leq m
\land k \leq 5 \leq m \\
& \land x \neq y \land x \neq 1 \land x \neq 3 \land x \neq 5
  \land y \neq 1 \land y \neq 3 \land y \neq 5 \\
& \land m-k+1 = 5
\end{align*}
Then, $k = 1, m = 5$ and $x$ can be $2$ and $y$ can be $4$ or vice versa.

However, when rule \eqref{size:ext} takes $y = 5 \land x \notin \{3,1,5\}$,
then $\{3,x,1,y,5\}$ becomes $\{3,x,1,5\}$, which cannot be an interval
regardless of the value of $x$ as there are two holes in it (i.e., $2$ and
$4$). In this case $\SATINT$ returns $\false$. \qed
\end{example}

\subsection{Inequality elimination ($\mathsf{remove\_neq}$)}\label{inequalities}


The $\INT$-formula returned by Algorithm \ref{glob} when $\mathsf{STEP_{S\INT}}$
reaches a fixpoint is not necessarily satisfiable.

\begin{example} [Unsatisfiable formula returned by
$\mathsf{STEP_{S\INT}}$]\label{ex:cupcup} The $\INT$-formula:
\begin{equation}\label{eq:ex1}
\Cup(A,B,C) \land \Cup(A,B,D) \land C \neq D
\end{equation}
cannot be further rewritten by any of the rewrite rules of $\mathsf{STEP_{S\INT}}$.
Nevertheless, it is clearly unsatisfiable. \qed
\end{example}

In order to guarantee that $\SATINT$ returns either $\false$ or satisfiable
formulas (see Theorem \ref{satisf}), we still need to remove all inequalities
of the form $\dot{A} \neq t$, where $\dot{A}$ is of sort $\sSet$, occurring as
an argument of $\INT$-constraints based on $\Cup$ or $\Size$. This is performed
(see Algorithm \ref{glob}) by executing the routine \textsf{remove\_neq}, which
applies the rewrite rule described by the generic rule scheme of Figure
\ref{fig:rules_neq_elim}. Basically, this rule exploits set extensionality to
state that non-equal sets can be distinguished by asserting that a fresh
element ($\dot n$) belongs to one but not to the other. Notice that the rule
\eqref{neq_elim:nset} is necessary when $u$ is a non-set term. In this case by
just using rule \eqref{neq_elim:set} we would miss the solution $\dot{A} = \e$.

\begin{figure}
\hrule\vspace{3mm}
 \raggedright
 If $\dot{A}$ is a set variable; $t$ is a set term;
 $u$ is a non-set term; and $\Phi$ is the input formula
 then:
\begin{flalign*}
 \quad\quad & \text{If $\dot{A}$ occurs as an argument of a $\pi$-constraint,
    $\pi \in \{\Cup, \Size\}$, in $\Phi$:} & \\
 & \dot{A} \neq t \lfun (\dot{n} \in \dot{A} \land \dot{n} \notin t) \lor
    (\dot{n} \in t \land \dot{n} \notin \dot{A})
    & \tag{$\neq_\sSet$} \label{neq_elim:set} \\[2mm]
 & \dot{A} \neq u \lfun \true
    & \tag{$\neq_\sUr$} \label{neq_elim:nset}
\end{flalign*}
\hrule
 \caption{Rule scheme for $\neq$-constraint elimination rules}
\label{fig:rules_neq_elim}
\end{figure}

%

\begin{example}[Elimination of $\neq$-constraints]
The $\INT$-formula of Example \ref{ex:cupcup} is rewritten by rule
\eqref{neq_elim:set} to:
\begin{flalign*}
\quad\quad & \Cup(A,B,C) \land \Cup(A,B,D) \land C \neq D \lfun & \\
 & \qquad (\Cup(A,B,C)
  \land \Cup(A,B,D)
  \land \dot{n} \in C \land \dot{n} \notin D) \lor
  (\Cup(A,B,C)
  \land \Cup(A,B,D) \land \dot{n} \notin C \land \dot{n} \in D) &
\end{flalign*}
Then, the $\in$-constraint in the first disjunct is rewritten into a
$=$-constraint (namely, $C = \{\dot{n} \plus \dot{N}\}$), which in turn is
substituted into the first $\Cup$-constraint. This constraint is further
rewritten by rules such as those shown in Figure \ref{f:clpset} and
\cite{Dovier00}, binding either $A$ or $B$ (or both) to a set containing
$\dot{n}$, which in turn forces $D$ to contain $\dot{n}$. This process will
eventually return $\false$, at which point the second disjunct is processed in
a similar way. \qed
\end{example}

\subsection{\label{irreducible}Irreducible constraints}

When no rewrite rule applies to the current $\INT$-formula $\Phi$ and $\Phi$
is not $\false$, the main loop of $\SATINT$ terminates returning $\Phi$ as its
result. This formula can be seen, without loss of generality, as $\Phi_\Set
\land \Phi_\Int$, where $\Phi_\Int$ contains all (and only) integer constraints
and $\Phi_\Set$ contains all other constraints occurring in $\Phi$.

The following definition precisely characterizes the form of atomic constraints
in $\Phi_\Set$.

\begin{definition}[Irreducible formula]\label{def:solved}
Let $\Phi$ be a $\INT$-formula, $A$ and $A_i$ $\INT$-terms of sort $\sSet$, $t$
and $\dot{X}$ $\INT$-terms of sort $\langle \{\sSet,\sUr\} \rangle$, $x$ a
$\INT$-term of any sort, $c$ a variable or a constant integer number, and $k$
and $m$ are terms of sort $\sInt$. A $\INT$-constraint $\phi$ occurring in
$\Phi$ is \emph{irreducible} if it has one of the following forms:
\begin{enumerate}[label=(\roman*), leftmargin=*, widest=viii]
\item \label{i:icfirst} $\dot{X} = t$, and neither $t$ nor $\Phi \setminus \{\phi\}$
contains $\dot{X}$;
\item $\dot{X} \neq t$, and $\dot{X}$ does not occur either in $t$ or
as an argument of any constraint $\pi(\dots)$, $\pi \in \{\Cup,\Size\}$, in
$\Phi$;
\item $x \notin \dot{A}$, and $\dot{A}$ does not occur in $x$;
\item $\Cup(\dot{A}_1,\dot{A}_2,\dot{A}_3)$, where $\dot{A}_1$ and $\dot{A}_2$
are distinct variables;
\item $\dot{A}_1 \disj \dot{A}_2$, where $\dot{A}_1$ and $\dot{A}_2$ are distinct variables;
\item $\Size(\dot{A}, c)$, $c \neq 0$;
\item\label{i:subset} $\dot{A} \subseteq [k,m]$, where $k$ or $m$ are variables.
\end{enumerate}
A $\INT$-formula $\Phi$ is irreducible if it is $\true$ or if all its
$\INT$-constraints are irreducible. \qed
\end{definition}

$\Phi_\Set$, as returned by $\SATINT$ once it finishes its main loop, is an
irreducible formula. This fact can be checked by inspecting the rewrite rules
presented in \cite{Dovier00,DBLP:journals/tplp/CristiaR23} and those given
in this section. This inspection is straightforward as there are no rewrite
rules dealing with irreducible constraints and all non-irreducible form
constraints are dealt with by some rule.

Putting constraints of the form $X \subseteq [k,m]$ aside,  $\Phi_\Set$ is
basically the formula returned by $\SATCARD$. Cristi\'a and Rossi \cite[Theorem
2]{DBLP:journals/tplp/CristiaR23} show that such a formula is always
satisfiable, unless the result is $false$.

It is important to observe that the atomic constraints occurring in $\Phi_\Set$
are indeed quite simple. In particular: \emph{a)} all extensional set terms
occurring in the input formula have been removed, except those occurring as
right-hand sides of $=$ and $\neq$ constraints; and \emph{b)} all integer
interval terms occurring in the input formula have been removed, except those
occurring at the right-hand side of $\subseteq$-constraints. Thus, all
(possibly complex) equalities and inequalities between set terms have been
solved. Furthermore, all arguments of $\Cup$ and $\disj$ constraints are
necessarily simple variables or variable-intervals---and only in constraints of
the form $X \subseteq [k,m]$.

\subsection{\label{after}Checking minimum solutions}
Once the main loop terminates and the returned formula still contains irreducible constraints of form \eqref{i:subset} as given in Definition \ref{def:solved}, Algorithm \ref{glob} calls $\minsol$.
The first step of $\minsol$ is to make the call $\STEPCARD(\Phi_1,Min)$ where $\Phi_1$ is an input and, if the call succeeds, $Min$ represents its output.
As already said, $\STEPCARD$ implements a decision procedure for formulas such as $\Phi_1$, i.e. formulas including LIA, $\Size$, $\Cup$ and $\disj$ constraints \cite{DBLP:journals/tplp/CristiaR23}.
In particular, $\STEPCARD$ uses SWI-Prolog's CLP(Q) library \cite{holzbaur1995ofai} to solve LIA problems by means of the following predicate:
\[
\mathsf{bb\_inf}(\mathit{Vars,Expr,Min,Vert})
\]
which finds a vertex ($\mathit{Vert}$) of the minimum ($\mathit{Min}$) of the expression $\mathit{Expr}$ subjected to the integer constraints present in the constraint store and assuming all the variables in $\mathit{Vars}$ take integer values.
Specifically, $\STEPCARD$ calls $\mathsf{bb\_inf}$ as follows:
\[
\mathsf{bb\_inf}(\mathit{intVars_{\Phi_1},\sum_{i=1}^k m_i,Min,\_})
\]
where $intVars_{\Phi_1}$ are all the integer variables present in $\Phi_1$;
each $m_i$ is the second argument of a $\Size$ constraint present in $\Phi_1$;
and $Min$ is a new variable.
That is, $\mathsf{bb\_inf}$ is called to minimize the sum of all the cardinalities present in $\Phi_1$. 
However, notice that this minimization succeeds only if the integer constraints present in $\Phi_1$ are satisfiable; if not, $\mathsf{bb\_inf}$ simply fails, making $\STEPCARD$ to fail as well.
If this call to $\mathsf{bb\_inf}$ succeeds then $Min$ is bound to an integer number.

If $\STEPCARD(\Phi_1,Min)$ succeeds, $\minsol$ iterates over all the solutions of the integer formula:
\begin{equation}\label{e:clpfd}
0 \leq m_1 \leq Min \land \dots \land 0 \leq m_k \leq Min \land Min = \sum_{i=1}^k m_i
\end{equation}
A solution to the above formula is called \emph{minimum solution}.
This is because the cardinalities present in $\Phi_1$ can only assume values less than or equal to $Min$, which in turn is the minimum value of the sum of all cardinalities.
When \eqref{e:clpfd} is solved $Min$ is an integer number, not a variable.
Furthermore, in the minimum solution $m_1 = c_1, \dots, m_k=c_k$ each $c_i$ is an integer number.

\begin{remark}\label{minsol}
It is easy to see that there is a finite number of minimum solutions. \end{remark}

Therefore, $\minsol$ checks whether or not any minimum solution is a solution of $\Phi_1 \land \Phi_2 \land \Phi_{\subseteq\INT}$ which is the form of the input formula $\Phi$ right after the main loop of Algorithm \ref{glob}.
As soon as a minimum solution is a solution of $\Phi$, $\minsol$ terminates returning $\Phi$.
If no minimum solution is a solution of $\Phi$, $\minsol$ returns $\false$ meaning that $\Phi$ is unsatisfiable.
In the later case we claim that the input formula is unsatisfiable, basically, because any other model would include a minimum solution.
This is proved in Section \ref{decproc}.

Observe that if $m_1 = c_1, \dots, m_k=c_k$ is a minimum solution then all the cardinalities present in $\Phi_1$ are bound to integer numbers.
Then, all $\Size$ constraints in $\Phi_1$ become of the form $\Size(E,c)$ with $E$ a variable and $c$ an integer number.
In this case the following rewrite rule is activated:
\begin{equation}
\text{If $k$ is an integer number: } \qquad
\Size(E,k) \lfun
  E = \{n_1,\dots,n_k\}
  \land ad(n_1,\dots,n_k) \label{size:const3}
\end{equation}
where $n_1,\dots,n_k$ are fresh variables and $ad(n_1,\dots,n_k)$ is a shorthand for $\bigwedge_{i=1}^{k-1} \bigwedge_{j=i+1}^k n_i \neq n_j$ (i.e., all $n_i$ are different from each other).
Hence, $\minsol$ calls $\mathsf{step\_loop}$, instead of $\SATINT$, because there is no need to recursively call $\minsol$ again as there will be no $\Size$ constraints after $\mathsf{step\_loop}$ applies rule \eqref{size:const3}.

\section{\label{decproc}$\SATINT$ is a decision procedure for $\LINT$}

In this section we analyze the soundness, completeness and termination
properties of $\SATINT$. The complete proofs of some theorems can be found in
Appendix \ref{app:proofs}.

The following theorem ensures that each rewriting rule used by $\SATINT$
preserves the set of solutions of the input formula.

\begin{theorem}[Equisatisfiability]\label{equisatisfiable}
Let $\phi$ be a $\INT$-constraint based on symbol $\pi \in \Pi \setminus
\PiInt$, and $\phi \lfun \Phi_1 \lor \dots \lor \Phi_n$ a rewrite rule for
$\pi$-constraints. Then, each solution $\sigma$ of $\phi$\footnote{More
precisely, each solution of $\phi$ expanded to the variables occurring in
$\Phi^i$ but not in $\phi$, so as to account for the possible fresh variables
introduced into $\Phi^i$.} is a solution of $\Phi_1 \lor \dots \lor \Phi_n$,
and vice versa, i.e., $\iS \models \phi[\sigma] \iff \iS \models (\Phi_1 \lor
\dots \lor \Phi_n)[\sigma]$.
\end{theorem}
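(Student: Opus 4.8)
The plan is to prove the theorem by exhaustive case analysis over the rewrite rules, establishing equisatisfiability for each rule individually. Since the theorem quantifies over every rule $\phi \lfun \Phi_1 \lor \dots \lor \Phi_n$ for $\pi \in \Pi \setminus \PiInt$, and soundness of the whole procedure rests on each rule preserving solutions, I would organize the proof around the families of rules grouped by the predicate symbol $\pi$: the set-unification/equality rules of Figure~\ref{f:inteq} (together with \eqref{eq:ext}), the membership rules of Figure~\ref{f:intin}, the disjointness rules of Figure~\ref{f:intdisj}, the union rules of Figure~\ref{f:intun} (plus \eqref{un:ext1}), and the size rule of Figure~\ref{f:intsize}. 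For the rules inherited unchanged from \CLPSET and $\SATCARD$, I would simply cite the equisatisfiability results already established in \cite{Dovier00,DBLP:journals/corr/abs-2102-05422}, so that the real work concentrates on the genuinely new rules, namely those in which at least one argument is an integer interval.

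For each new rule I would verify the biconditional $\iS \models \phi[\sigma] \iff \iS \models (\Phi_1 \lor \dots \lor \Phi_n)[\sigma]$ directly from the interpretation function of Definition~\ref{app:def:int_funct}, taking care that the existentially quantified fresh variables on the right-hand side are witnessed appropriately (as the footnote to the statement flags, $\sigma$ must be understood as expanded to those fresh variables). The rules that merely encode integer membership or cardinality, such as \eqref{in}, \eqref{nin}, \eqref{size:size}, \eqref{eq:e} and \eqref{eq:int}, reduce to an elementary check: under $\iS$ the interval $[k,m]$ denotes $\{p \in \num \mid k^\iS \leq p \leq m^\iS\}$, so each disjunct on the right-hand side expresses exactly the arithmetic condition characterizing the left-hand side, with the $m < k$ versus $k \leq m$ split cleanly separating the empty and non-empty cases. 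The disjunction structure of rules like \eqref{disj:int} and \eqref{un:123} is verified by checking that the listed linear-arithmetic disjuncts partition precisely the models in which the set-theoretic relation holds; here the rows labelled [\textsf{in Z}] require the extra observation that consecutiveness of integers (the condition $i \leq m+1$) is what makes the union of two intervals an interval in $\num$.

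The crux of the argument, and the step I expect to be the main obstacle, is justifying the rules built on the identity \eqref{eq:id}---namely \eqref{e:ext}, \eqref{disj:ext}, \eqref{un:3} and \eqref{un:12}. For these I would invoke \eqref{eq:id} as a lemma (proved in \ref{app:proofs}): under the hypothesis $k \leq m$, a set $A$ equals $[k,m]$ if and only if $A \subseteq [k,m] \land \card{A} = m-k+1$. The delicate point is that these rules introduce a fresh set variable $\dot N$ constrained only by $\dot N \subseteq [k,m] \land \Size(\dot N, m-k+1)$, and I must argue that any such $\dot N$ is forced by \eqref{eq:id} to be \emph{exactly} $[k,m]$, so that replacing the interval by $\dot N$ inside the $\Cup$- or $\disj$-constraint neither loses nor gains solutions. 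The forward direction (a solution of $\phi$ induces one of the right-hand side by taking $\dot N^\iS = [k,m]^\iS$) is routine; the backward direction is where \eqref{eq:id} does the heavy lifting, converting the cardinality-plus-subset witness back into the interval. I would also separately confirm that the rules \eqref{neq:e}, \eqref{neq:int} and \eqref{neq:ext} correctly negate their positive counterparts, and that the $\mathsf{remove\_neq}$ rules \eqref{neq_elim:set} and \eqref{neq_elim:nset} of Figure~\ref{fig:rules_neq_elim} preserve solutions via extensionality (the $\neq_\sUr$ case being vacuously satisfiable because a set variable and a non-set term are always distinct under the sort-sensitive interpretation of $=$).
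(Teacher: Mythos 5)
Your proposal is correct and follows essentially the same route as the paper's proof: cite the prior equisatisfiability results for the rules inherited from \CLPSET and $\SATCARD$, verify the purely arithmetic interval rules directly against the interpretation, reduce rules \eqref{e:ext}, \eqref{disj:ext}, \eqref{un:3} and \eqref{un:12} to the identity \eqref{eq:id}, and handle \eqref{un:123} by a case analysis on the relative positions of the two intervals in which the condition $i \leq m+1$ captures integer adjacency. If anything, your write-up is more explicit than the paper's own proof, which dismisses most cases as trivial and defers only \eqref{un:123} to the appendix, whereas you spell out how the fresh witness $\dot{N}$ is forced to equal $[k,m]$ in both directions of the biconditional.
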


\begin{proof}
The proof is based on showing that for each rewrite rule the set of solutions
of left and right-hand sides is the same. For those rules dealing with set
terms different from integer intervals the proofs can be found in
\cite{DBLP:journals/jar/CristiaR20} and
\cite{DBLP:journals/tplp/CristiaR23}.

The proof of equisatisfiability for the rules dealing with integer intervals is
as follows. The equisatisfiability of rules \eqref{eq:e},
\eqref{eq:int}-\eqref{disj:e}, \eqref{disj:int}-\eqref{un:subsetext} is trivial
as these rules implement basic results of set theory and integer intervals. The
equisatisfiability of rules \eqref{e:ext}, \eqref{disj:ext}, \eqref{un:3} and
\eqref{un:12} depends on basic facts of set theory and integer intervals (e.g.,
the first branch of rule \eqref{un:3}), and on the application of the identity
\eqref{eq:id}. It is easy to check that \eqref{eq:id} has been consistently
applied on each rule. Note that the same argument can be applied to the rules
included in Appendix \ref{app:rulesun}. The proof of equisatisfiability of rule
\eqref{un:123} can be found in Appendix \ref{app:proofs}.
\end{proof}

The next theorem ensures that, after termination, the whole rewriting
process implemented by $\SATINT$ is correct and complete.

\begin{theorem}[Soundness and completeness]\label{sound&complete}
Let $\Phi$ be a $\INT$-formula and $\Phi^1, \Phi^2,\dots,\Phi^n$ be the
collection of $\INT$-formulas returned by $\SATINT(\Phi)$. Then, every possible
solution of $\Phi$ is a solution of one of the $\Phi^i$ and, vice versa, every
solution of one of these formulas is a solution for $\Phi$.
\end{theorem}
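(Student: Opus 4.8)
The plan is to prove soundness and completeness by building on the equisatisfiability result (Theorem \ref{equisatisfiable}) and tracking how solutions propagate through the phases of Algorithm \ref{glob}. The overall strategy is to show that the set of solutions is preserved at each transformation step, so that the disjunction of the returned formulas $\Phi^1 \lor \dots \lor \Phi^n$ has exactly the same solution set as the input $\Phi$. First, I would observe that the preprocessing step \textsf{gen\_size\_leq} only adds integer constraints that are logically implied by the meaning of $\Size$ (forcing cardinalities to be non-negative), so it is solution-preserving. Then I would use Theorem \ref{equisatisfiable} to handle the main loop: each application of a rewrite rule within $\mathsf{STEP_{S\INT}}$ replaces a constraint $\phi$ by an equisatisfiable disjunction, and since every solution of $\phi$ extends to a solution of exactly one disjunct and vice versa, an induction on the number of rewrite steps shows that the nondeterministic tree of formulas produced by $\mathsf{step\_loop}$ collectively preserves the solution set of its input.

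Next I would address the two routines not covered directly by Theorem \ref{equisatisfiable}, namely \textsf{remove\_neq} and the final $\STEPCARD$ phase. For \textsf{remove\_neq}, I would argue that rule \eqref{neq_elim:set} is just set extensionality---two sets are unequal iff some element lies in one but not the other---and rule \eqref{neq_elim:nset} is sound because a set variable $\dot{A}$ is always distinct from a non-set term $u$ (the sorts differ, so inequality holds trivially); hence both rules preserve solutions. For the post-loop phase, I would split $\Phi$ into $\Phi_{\CARD} \land \Phi_{\subseteq\INT}$ and then $\Phi_{\CARD}$ into $\Phi_1 \land \Phi_2$, and invoke the corresponding soundness-and-completeness result already established for $\SATCARD$ \cite{DBLP:journals/corr/abs-2102-05422}: in the \textbf{else} branch ($\Phi_{\subseteq\INT} = \true$), $\SATINT$ literally reduces to $\SATCARD$, so the claim transfers verbatim. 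The genuinely new case is the \textbf{then} branch, where $\STEPCARD_\mathsf{minsol}$ is invoked on $\Phi_1$ and the residual $\Phi_{\subseteq\INT}$ is reattached.

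The main obstacle will be justifying the \textbf{then} branch, where the argument departs from $\SATCARD$. Here the completeness direction is delicate: one must show that running $\STEPCARD$ in minimal-solution mode, rather than in its ordinary mode, does not discard solutions of $\Phi$. I expect this to rest on the key idea sketched in the introduction, namely that if the minimal solution of $\Phi_{\CARD}$ fails to satisfy $\Phi_{\subseteq\INT}$ then no larger solution can either---so checking the minimal solution against $\Phi_{\subseteq\INT}$ suffices to decide satisfiability of the whole conjunction. However, I would keep the heavy lifting of this particular claim for the dedicated satisfiability theorem (Theorem \ref{satisf}), and in the present proof I would cite it as the guarantee that the \textbf{then} branch neither introduces spurious solutions (soundness) nor loses genuine ones (completeness).

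Assembling these pieces, the proof proceeds by composing solution-preservation across the pipeline: $\sigma$ solves $\Phi$ iff it solves the formula after \textsf{gen\_size\_leq}, iff it (suitably expanded on fresh variables) solves one of the branches produced by the main loop together with \textsf{remove\_neq}, iff it solves one of the final formulas $\Phi^i$ emitted by the appropriate \textbf{then}/\textbf{else} branch. Each ``iff'' is supplied by one of the ingredients above, and since the compositions are finite, the equivalence between $\Phi$ and $\bigvee_i \Phi^i$ follows, establishing both soundness (every solution of some $\Phi^i$ solves $\Phi$) and completeness (every solution of $\Phi$ solves some $\Phi^i$).
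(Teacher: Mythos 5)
Your proposal is correct and takes essentially the same route as the paper's proof: both compose solution preservation across the pipeline, using the fact that \textsf{gen\_size\_leq} only adds constraints entailed by the semantics of $\Size$, invoking Theorem \ref{equisatisfiable} for the rewrite rules applied in the main loop (the paper subsumes \textsf{remove\_neq} under this as well, where you argue it directly from extensionality), and treating the final $\STEPCARD$ phase as a check that does not alter the solution set, with the delicate minimal-solution argument for the \textbf{then} branch deferred to Theorem \ref{satisf} exactly as the paper does. If anything, your write-up is more explicit than the paper's (e.g., the induction over rewrite steps and the sort-based justification of rule \eqref{neq_elim:nset}), but the decomposition and the key lemmas are the same.
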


\begin{proof}
According to Definition \ref{irreducible}, each formula $\Phi_i$ returned at
the end of $\SATINT$'s main loop is of the form $\Phi^i_\Set \land
\Phi^i_\Int$, where $\Phi^i_\Set$ is a $\INT$-formula in irreducible form and
$\Phi^i_\Int$ contains all integer constraints encountered during the
processing of the input formula.

As concerns $\Phi^i_\Int$, no rewriting is actually performed on the
constraints occurring in it. Thus the set of solutions is trivially preserved.

Considering also the calls to $\STEPCARD$ and \textsf{gen\_size\_leq}, we
observe that the first is just a check which either returns $\false$ or has no
influence on its input formula, while the second simply adds constraints
entailed by the definition of set cardinality.

As concerns constraints in $\Phi_i^\Set$, we observe that $\SATINT$ is just the
repeated execution of the rewriting rules described in the previous section,
for which we have individually proved equisatisfiability (see Theorem
\ref{equisatisfiable}). No other action of Algorithm \ref{glob} can add or
remove solutions from the input formula.

Thus, the whole $\SATINT$ process preserves the set of solutions of the input
formula.
\end{proof}

\begin{theorem}[Satisfiability of the output formula]\label{satisf}
Any $\INT$-formula different from $\false$ returned by $\SATINT$ is
satisfiable w.r.t. the underlying interpretation structure $\iS$.
\end{theorem}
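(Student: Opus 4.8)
The goal is to show that any $\INT$-formula $\Phi$ different from $\false$ returned by $\SATINT$ is satisfiable. By the discussion following Definition \ref{def:solved}, the returned formula has the form $\Phi_\Set \land \Phi_\Int$ where $\Phi_\Set$ is irreducible. There are two cases depending on which branch of Algorithm \ref{glob} produces the output, i.e., on whether $\Phi_{\subseteq\INT}$ is $\true$ or not. First I would handle the \textbf{else} branch, in which $\Phi_{\subseteq\INT} = \true$ and the output is $\STEPCARD(\Phi_1) \land \Phi_2$. Here $\SATINT$ has degenerated into $\SATCARD$ and the formula contains no interval constraints; satisfiability then follows directly from the corresponding result for $\SATCARD$, namely \cite[Theorem 2]{DBLP:journals/corr/abs-2102-05422}, which guarantees that the non-$\false$ output of $\SATCARD$ is satisfiable. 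So the substantive work is entirely in the \textbf{then} branch.

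**The core argument (then branch).** In the \textbf{then} branch the output contains irreducible constraints of the form $\dot{A} \subseteq [k,m]$ (from Definition \ref{def:solved}\ref{i:icfirst} onward, case (vii)) together with a $\LCARD$ part. The plan is to invoke the two key ideas stated in the Introduction. First, $\SATINT$ calls $\STEPCARD_\mathsf{minsol}(\Phi_1)$ to compute a \emph{minimal solution} of the cardinality fragment $\Phi_1$—a solution in which every set has as few elements as possible. The crucial structural fact is that every set variable $\dot A$ constrained by a $\subseteq$-constraint appears there as $\dot A \subseteq [k,m]$, and (as remarked in Section \ref{un}, rule \eqref{un:subsetvar}) such a constraint is always satisfiable by taking $\dot A = \e$. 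Hence I would argue that the minimal solution assigns the \emph{smallest feasible cardinality} to each such $\dot A$, and then check whether this minimal solution can be realized inside the interval $[k,m]$: concretely, a set of cardinality $c$ fits as a subset of $[k,m]$ precisely when $m - k + 1 \ge c$ (equivalently, the interval is large enough to contain $c$ distinct integers). The key lemma to establish is monotonicity: if the minimal solution of $\Phi_{\CARD}$ \emph{cannot} be embedded into the required intervals, then no larger solution can either, because enlarging any set only increases the cardinality demand while the interval capacity $m-k+1$ is fixed by $\Phi_1$. This is exactly the ``third'' key idea quoted in the Introduction. Thus, when a realizable minimal solution exists, I would exhibit an explicit valuation $\sigma$: take the minimal-solution assignment for the $\LCARD$ part, and for each irreducible $\dot A \subseteq [k,m]$ assign $\dot A$ an arbitrary set of $|\dot A|$ distinct integers drawn from $[k,m]$ (possible precisely by the capacity condition), checking that this respects all remaining irreducible constraints (disjointness, $\notin$, $\neq$) via the freedom to choose which integers to pick.

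**The main obstacle.** The hard part will be verifying that the candidate valuation simultaneously satisfies \emph{all} the irreducible constraints of Definition \ref{def:solved}, not just the $\subseteq$-interval ones in isolation. The interval constraints couple the set variables to concrete integer values, so I must confirm that the integers chosen to populate distinct set variables can be made to respect the $\Cup$, $\disj$, and $\neq$ constraints among them, and that no $\notin$ or inequality constraint is violated. The reason this goes through is that $\Phi_\Set$ is irreducible and the preceding run of $\mathsf{step\_loop}$ together with $\mathsf{remove\_neq}$ has already reduced all set interactions to the simple forms of Definition \ref{def:solved}; in particular all problematic $\neq$-constraints on set arguments have been eliminated. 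I would lean on \cite[Theorem 2]{DBLP:journals/corr/abs-2102-05422} to obtain a base solution of the pure $\LCARD$ portion and then argue that the interval membership requirements can be layered on top without conflict, precisely because the minimal-solution check has certified that enough room exists in each interval. The delicate point—and where I expect to spend the most care—is the monotonicity claim: a rigorous statement of why \emph{minimality} is the right notion to test, i.e., why failure at the minimal solution propagates to all solutions, which rests on the fact that $\Phi_1$ fixes the interval widths $m-k+1$ independently of how large the subset variables grow.
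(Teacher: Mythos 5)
Your high-level architecture is the same as the paper's (the \textbf{else} branch delegated to \cite[Theorem 2]{DBLP:journals/corr/abs-2102-05422}, and the \textbf{then} branch handled by the minimal-solution check plus a monotonicity argument), but both of your critical steps have genuine gaps. First, your justification of monotonicity---``enlarging any set only increases the cardinality demand while the interval capacity $m-k+1$ is fixed by $\Phi_1$''---does not hold as stated: $k$ and $m$ are \emph{variables}, so the capacity $m-k+1$ is not fixed across solutions, and capacity shortfall is not the only way the interval constraints can fail. The paper's proof never compares capacities between solutions; instead it shows that if the minimal solution fails, then $\Phi_1 \land \Phi_2$ itself must \emph{entail} one of a short list of inconsistencies with $\Phi_{\subseteq[\,]}$: an element of $\dot{A}_i$ of the wrong sort ($x \in \dot{A}_i \land \Ninteger(x)$), an element forced outside the bounds ($x < k^i \lor m^i < x$), a cardinality exceeding the width ($\Size(\dot{A}_i,q) \land m^i-k^i+1 < q$), or a pigeonhole case combining $p+1$ $\notin$-constraints pinned inside $[k^i,m^i]$ with $\Size(\dot{A}_i,m^i-k^i+1-p)$. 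Because these are entailed by $\Phi_1 \land \Phi_2$, they persist under \emph{every} solution, larger or not---that is the correct reason failure propagates, and your capacity test covers only one of the four cases.

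Second, your positive case is exactly the ``main obstacle'' you name, and your plan to resolve it---assign each $\dot{A}$ an arbitrary set of $\card{\dot{A}}$ distinct integers from $[k,m]$ and check compatibility with the $\Cup$, $\disj$, $\neq$, $\notin$ constraints ``via the freedom to choose''---is left undischarged and is in fact circular (you appeal to the minimal-solution check having ``certified enough room,'' which is the claim being proved). The paper avoids hand-picking integers altogether: after $\STEPCARD_\mathsf{minsol}$ returns the minimal cardinalities $v_i$, $\mathsf{step\_loop}$ is re-run; each $\Size(\dot{A}_i,v_i)$ with constant $v_i$ is rewritten into an extensional set of $v_i$ fresh, pairwise-distinct integer variables, rule \eqref{un:subsetext} then turns the affected subset constraints into pure integer constraints that CLP(Q) decides, and the only surviving constraints $\dot{A} \subseteq [k,m]$ are those where $\dot{A}$ carries no $\Size$-constraint---these are satisfied by $\dot{A} = \e$, consistently with the empty-set solution of the \CLPSET solved form. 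So the witnessing valuation is ``empty sets plus whatever CLP(Q) returns,'' and the compatibility you worry about is discharged mechanically by the rewriting system rather than by a combinatorial choice argument. Without either that mechanism or an explicit combinatorial lemma replacing it, your sketch does not yet constitute a proof.
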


\begin{proof}

[\textit{sketch}] Given an input formula $\Phi$, containing at least one variable-interval, at the end of the main loop of Algorithm \ref{glob} we have $\Phi \defs \Phi_{\CARD} \land \Phi_{\subseteq[\,]}$, where $\Phi_{\CARD}$ is a $\CARD$-formula and $\Phi_{\subseteq[\,]}$ is a conjunction of constraints of the form $\dot{X} \subseteq [k,m]$ with $k$ or $m$ variables.
As can be seen in Algorithm \ref{glob}, $\STEPCARD$ is called on $\Phi_{\CARD}$ (actually, a sub-formula of it). If $\SATCARD$ finds $\Phi_{\CARD}$ unsatisfiable then $\Phi$ is unsatisfiable.
However, if $\SATCARD$ finds $\Phi_{\CARD}$ satisfiable we still need to check if $\Phi_{\subseteq[\,]}$ does not compromise the satisfiability of $\Phi_{\CARD}$.
To this end, $\SATINT$ iterates over all the minimum solutions of $\Phi_{\CARD}$.
If a minimum solution of $\Phi_{\CARD}$ is a solution of $\Phi$, then $\Phi$ is clearly satisfiable.
Otherwise (i.e., no minimum solution of $\Phi_{\CARD}$ is a solution of $\Phi$), we show that $\Phi$ is unsatisfiable.
That is, if any minimum solution of $\Phi_{\CARD}$ is not a solution of $\Phi$, then any larger solution (w.r.t. the minimum solution) will not be a solution of $\Phi$.
\end{proof}

Now, we can state the termination property for $\SATINT$.

\begin{theorem}[Termination] \label{termination-glob}
The $\SATINT$ procedure can be implemented in such a way that it terminates
for every input $\LINT$ formula.
\end{theorem}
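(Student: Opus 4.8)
The plan is to establish termination of the entire $\SATINT$ procedure by analyzing each of its components separately and then arguing that the outer main loop can execute only finitely many times. Since Algorithm \ref{glob} is built from $\mathsf{gen\_size\_leq}$, the main loop (iterating $\mathsf{step\_loop}$ and $\mathsf{remove\_neq}$), and a final call to $\STEPCARD$ (possibly via the \textbf{then} branch), I would prove termination of each piece and then of their composition. The routines $\mathsf{gen\_size\_leq}$ and $\mathsf{remove\_neq}$ are single-pass rewritings, so their termination is immediate; the genuine work lies in $\mathsf{step\_loop}$ and in the reliance on $\STEPCARD$, whose termination is inherited from $\SATCARD$ and Zarba's algorithm as established in \cite{DBLP:journals/corr/abs-2102-05422,DBLP:conf/frocos/Zarba02}.

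First I would show that $\mathsf{step\_loop}$ terminates, i.e. that repeated application of $\mathsf{STEP_{S\INT}}$ reaches a fixpoint. The standard technique here is to exhibit a well-founded complexity measure on $\INT$-formulas that strictly decreases with each non-trivial rewrite step. A natural candidate is a multiset or lexicographic measure built from, for each constraint, the syntactic size of its set and interval arguments---counting, say, the number of occurrences of the function symbols $\w$ and $\i$ together with the nesting depth of set terms. Inspecting the rewrite rules of Sections \ref{equality}--\ref{sec:size}, one checks that each rule either eliminates an interval (rules \eqref{e:ext}, \eqref{disj:ext}, \eqref{un:3}, \eqref{un:12} replace an $\i$-term by a fresh variable together with lower-complexity constraints), strips an element off an extensional set (rules \eqref{un:subsetext}, \eqref{size:ext}, \eqref{eq:ext}), or reduces a set constraint to purely integer constraints (rules \eqref{in}, \eqref{nin}, \eqref{size:size}, \eqref{disj:int}, \eqref{un:123}), the latter being irreducible thereafter. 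The care needed is with the fresh variables introduced on the right-hand sides: I must ensure the measure accounts for substitution so that, for instance, binding $\dot{N} = [k,m]$ and then rewriting $\dot{N}$ into an extensional set via \eqref{un:ext1} still decreases the measure overall rather than regenerating intervals. The side conditions such as $x \notin \dot{N_1}$ and $x \notin \dot{N_2}$ in rule \eqref{un:ext1}, which the text notes ``prevents Algorithm \ref{glob} from generating infinite terms,'' are exactly what guarantees this, so I would lean on them explicitly.

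Next I would argue that the outer \textbf{repeat}-\textbf{until} loop terminates. Each iteration computes $\Phi \gets \mathsf{remove\_neq}(\mathsf{step\_loop}(\Phi))$ and halts when $\Phi = \Phi'$; since $\mathsf{remove\_neq}$ rewrites a $\neq$-constraint into membership constraints that feed back into $\mathsf{step\_loop}$, I would show that the combined effect is also captured by a global well-founded measure---for example, adjoining to the $\mathsf{step\_loop}$ measure a component counting the number of $\neq$-constraints whose left-hand side is a set variable occurring as an argument of a $\Cup$- or $\Size$-constraint. Each firing of rule \eqref{neq_elim:set} or \eqref{neq_elim:nset} strictly decreases this count (no rule reintroduces such inequalities), so the lexicographic product of the two measures is well-founded and the main loop halts. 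Finally, because the rewrite rules are non-deterministic, termination must be understood as termination of every non-deterministic branch; since each branch strictly decreases the same well-founded measure, finiteness holds branch-wise, and the branching factor of each rule is bounded, giving a finitely-branching, well-founded computation tree that is finite by K\"onig's lemma.

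The main obstacle I expect is the interaction between interval-eliminating rules and set-unification rules within a single $\mathsf{step\_loop}$ run: applying \eqref{eq:id} turns an interval into a fresh variable constrained by $\dot{N} \subseteq [k,m] \land \Size(\dot{N},m-k+1)$, and this $\Size$-constraint may later trigger rule \eqref{size:ext}, which peels elements and could, in a naive measure, appear to increase complexity. Designing the measure so that the one-time ``cost'' of converting an interval dominates all the subsequent peeling---and verifying that no cycle of rules can regenerate a removed interval or set constructor---is the delicate point. The phrase ``can be implemented in such a way that it terminates'' in the statement signals that a fair, fixpoint-detecting control regime (such as the $\Phi = \Phi'$ guards already present in Algorithm \ref{glob}) is part of the claim; I would therefore make explicit that termination relies on this control discipline together with the well-founded measure, rather than on confluence of the bare rule set. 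Once the measure is fixed and checked rule-by-rule, termination of $\STEPCARD$ is imported directly, and the composition argument closes the proof.
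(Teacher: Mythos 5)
Your high-level decomposition (import $\SATCARD$/$\STEPCARD$ termination, then handle the new interval rules) matches the paper's, but from there you take a genuinely different route: you propose a global well-founded (multiset/lexicographic) measure that every rewrite strictly decreases, closed off with K\"onig's lemma. The paper does not build any quantitative measure. Instead it proves collective termination by a stratification of the rule call graph: \emph{(1)} the old $\SATCARD$ rules can never call the new rules, because $\LCARD$ constraints contain no interval terms and no rule regenerates an interval once eliminated; \emph{(2)} a base stratum of rules (e.g., \eqref{in}, \eqref{nin}, \eqref{size:size}, \eqref{un:123}) emits only integer constraints, discharged by CLP(Q); \emph{(3)} the subset rules \eqref{un:subsete}--\eqref{un:subsetext} are at worst self-recursive, with \eqref{un:subsetext} peeling one element per step until \eqref{un:subsete} or \eqref{un:subsetvar} fires; and \emph{(4)} each remaining new rule is checked, one by one, to emit only constraints belonging to an already-terminating stratum. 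This case analysis is exactly what replaces your measure, and it is where the actual proof content lives.

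The genuine gap in your proposal is that the measure is never constructed, and you yourself flag the critical obligation---that applying \eqref{eq:id} (fresh variable plus $\Size$- and $\subseteq$-constraints, later expanded by set unification and element peeling) must not regenerate complexity, and that no rule cycle can recreate a removed interval---as an unresolved ``delicate point.'' Naming the obstacle is not discharging it; the paper discharges precisely this by the rule-by-rule enumeration above. Moreover, one auxiliary claim you rely on for the outer loop is false as stated: you assert that firings of \eqref{neq_elim:set} strictly decrease the count of offending $\neq$-constraints because ``no rule reintroduces such inequalities.'' But eliminating $\dot{n} \in C$ turns $C$ into an extensional set, after which $\notin$-constraints against extensional sets are rewritten into conjunctions of fresh $\neq$-constraints---and since sets may contain sets, these can again be set inequalities whose variables occur in $\Cup$- or $\Size$-constraints. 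This feedback is exactly why Algorithm \ref{glob} iterates $\mathsf{remove\_neq}$ and $\mathsf{step\_loop}$ to a fixpoint, and it breaks the monotonicity of your proposed lexicographic component; handling it requires the structural interaction argument inherited from the $\SATCARD$ termination proof, not a simple count. So while a measure-based proof could in principle be made to work, your proposal as written would fail at both of these points.
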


\begin{proof}

[\textit{sketch}] Termination of $\SATINT$ is a consequence of: \emph{a)} termination of
$\SATCARD$ \cite[Theorem 3]{DBLP:journals/tplp/CristiaR23}; \emph{b)} the
individual termination of each new rewrite rule added to
$\mathsf{STEP_{S\INT}}$; and \emph{c)} the collective termination of all the
rewrite rules of $\mathsf{STEP_{S\INT}}$.

Assuming \emph{b)} and \emph{c)}, the same arguments used in \cite[Theorem
3]{DBLP:journals/tplp/CristiaR23} can be applied to Algorithm \ref{glob}.
That is, Algorithm \ref{glob} uses $\mathsf{STEP_{S\INT}}$ instead of the
$\mathsf{STEP_S}$ procedure used by $\SATCARD$ and adds the \textbf{then}
branch after the main loop. $\mathsf{STEP_{S\INT}}$ differs from
$\mathsf{STEP_S}$ in the new rewrite rules introduced in Section \ref{satcard}.
Therefore, it is enough to prove that $\mathsf{STEP_{S\INT}}$ terminates as
$\mathsf{STEP_S}$ does. In turn, this entails to prove \emph{b)} and
\emph{c)}---as done when the termination of \CLPSET and $\SATCARD$ were proved.
\end{proof}

\begin{theorem}[Decidability] \label{decidability}
Given a $\INT$-formula $\Phi$, then $\Phi$ is satisfiable with respect to the
intended interpretation structure $\iS$ if and only if there is a
non-deterministic choice in $\SATINT(\Phi)$ that returns a $\INT$-formula
different from $\false$. Conversely, if all the non-deterministic computations
of $\SATINT(\Phi)$ terminate with $\false$, then $\Phi$ is surely
unsatisfiable. Hence, $\SATINT$ is a decision procedure for $\LINT$.
\end{theorem}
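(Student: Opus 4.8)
The plan is to derive this theorem as a corollary of the three preceding results, Theorems~\ref{sound&complete}, \ref{satisf} and \ref{termination-glob}, since the three pillars of a decision procedure---soundness of the ``satisfiable'' verdict, completeness, and termination---have essentially been established already. I would organize the argument as: first the logical equivalence between satisfiability of $\Phi$ and the existence of a non-failing branch, then the observation that termination makes this equivalence into an effective test.

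First I would prove the equivalence. For the direction from right to left, suppose some non-deterministic choice of $\SATINT(\Phi)$ returns a formula $\Phi^i \neq \false$. By Theorem~\ref{satisf}, $\Phi^i$ is satisfiable, hence it admits a solution $\sigma$; by Theorem~\ref{sound&complete}, $\sigma$ is also a solution of $\Phi$, so $\Phi$ is satisfiable. For the converse, assume $\Phi$ has a solution $\sigma$. By Theorem~\ref{sound&complete}, $\sigma$ is a solution of one of the returned formulas $\Phi^i$; since $\false$ has no solutions, that particular $\Phi^i$ cannot be $\false$, and therefore some non-deterministic choice returns a formula different from $\false$. The ``conversely'' sentence of the statement is precisely the contrapositive of this last implication: if every computation terminates with $\false$, then $\Phi$ admits no solution and is unsatisfiable.

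It then remains to turn this equivalence into an effective test, and this is where termination enters. By Theorem~\ref{termination-glob}, each non-deterministic computation halts; moreover, every rewrite rule rewrites a constraint into a \emph{finite} disjunction (Definition~\ref{d:rw_rules}), so the computation tree explored by $\SATINT$ is finitely branching and, being of finite depth along each branch, is a finite tree. Consequently all non-deterministic choices can be enumerated and evaluated in finite time, and the criterion ``$\Phi$ is satisfiable iff at least one branch returns a formula different from $\false$'' becomes a terminating test. Combining this with the equivalence above shows that $\SATINT$ decides the satisfiability of every $\LINT$-formula.

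I do not expect a genuine obstacle here, as the substantive content is already carried by the cited theorems; the one point requiring a little care is the passage from branch-wise termination to finiteness of the whole search space. It is not enough that each individual branch terminates: one must invoke the finite branching guaranteed by the form of the rewrite rules so that exhaustively inspecting all branches is itself a terminating procedure. Once that is noted, the statement follows immediately.
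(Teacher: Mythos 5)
Your proposal is correct and follows essentially the same route as the paper: the theorem is derived as a corollary of Theorems~\ref{sound&complete}, \ref{satisf} and \ref{termination-glob}, with satisfiability of $\Phi$ reduced to at least one returned $\Phi^i$ being different from $\false$, and termination making the test effective. Your extra remark about finite branching (finitely many disjuncts per rewrite rule, per Definition~\ref{d:rw_rules}) is a valid refinement that the paper leaves implicit in its appeal to Theorem~\ref{termination-glob}, but it does not change the argument.
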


\begin{proof}
Thanks to Theorem \ref{sound&complete} we know that, if $\SATINT$ terminates,
the initial input formula $\Phi$ is equisatisfiable to the disjunction of
formulas $\Phi^1, \Phi^2,\dots,\Phi^n$ non-deterministically returned by
$\SATINT$.
Thanks to Theorem \ref{satisf}, we know that any $\INT$-formula different from
$\false$ returned by $\SATINT$ is surely satisfiable w.r.t. the underlying
interpretation structure $\iS$.
Then, if $\SATINT$ terminates, the initial input formula $\Phi$ is satisfiable
iff the formula $\Phi^1 \lor \dots \lor \Phi^n$ is satisfiable, that is, at
least one of the $\Phi^i$ is different from $\false$.
Thanks to Theorem \ref{termination-glob}, we know that $\SATINT$ terminates for
all admissible $\INT$-formulas. Hence, $\SATINT$ is always able to decide
whether the input formula $\Phi$ is satisfiable or not.
\end{proof}

In Section \ref{expressiveness}, we show several formulas that $\SATINT$ is able
to detect to be unsatisfiable.

Note that many of the rewriting procedures given in the previous section will
stop even when returning relatively complex formulas.

\begin{example}\label{ex:solution}
Assuming all the arguments are variables, $\SATINT$ called on the formula:
\begin{equation*}
\{x \plus A\} = [k,m]
\end{equation*}
will return the following two formulas:
\begin{align*}
& k \leq x \land x \leq m  \land A \subseteq [k,m]  \land x \notin A  \land \Size(A,N_2) \land 1 \leq N_1 \land N_2 = N_1 - 1  \land N_1 = m-k+1 \\[2mm]
& A = \{x \plus N_1\}  \land
k \leq x  \land x \leq m  \land N_1 \subseteq [k,m] \land x \notin N_1 \land
 \Size(N_1,N_3) \land 1 \leq N_2 \land N_3 = N_2 - 1 \land N_2 = m-k+1
\end{align*}
This is so because there is no rewrite rule for
constraints such as $\Size(A,N_2)$ when both arguments are variables. However,
Theorem \ref{satisf} ensures that both formulas are satisfiable. For example,
the first one is satisfiable with $N_1 = 1, N_2 = 0, m = k = x, A = \e$. \qed
\end{example}

\subsection{Complexity of $\SATINT$}

$\SATINT$ strongly relies on set unification. In fact, most rewrite rules
dealing with integer intervals rely on the identity \eqref{eq:id} which,
roughly speaking, forces an interval to become an extensional set and then,
ultimately, to be managed through set unification.

Hence, complexity of our decision procedure strongly depends on complexity of
set unification. As observed in \cite{Dovier2006}, the decision problem for set
unification is NP-complete. A simple proof of the NP-hardness of this problem
has been given in \cite{DBLP:journals/jlp/DovierOPR96}. The proof is based on
representing 3-SAT as a set unification problem; thus, solving the latter in
polynomial time could also be exploited for solving 3-SAT in polynomial time.
Concerning NP-completeness, the algorithm presented here clearly does not
belong to NP since it applies syntactic substitutions. Nevertheless, it would
be possible to encode this algorithm using well-known techniques that avoid
explicit substitutions, maintaining a polynomial time complexity along each
non-deterministic branch of the computation.

Moreover, the implementation of the $\STEPCARD$ procedure requires to
perform, among others, the following actions
\cite{DBLP:journals/tplp/CristiaR23}: compute the set of solutions of
a Boolean formula derived from the irreducible form; and solve an integer
linear programming problem \emph{for each} subset of the Boolean solutions,
which entails to compute the powerset of the Boolean solutions. Both these
problems are inherently exponential in the worse case.

Finally, observe that, $\SATINT$ deals not only with the decision problem for
set unification but also with the associated function problem (i.e., it can
compute solutions for the problem at hand). Since solving the function problem
clearly implies solving the related decision problem, the complexity of
$\SATINT$ can be no better than the complexity of the decision problem for set
unification.

\section{\label{expressiveness}Expressiveness of $\LINT$ , Power of $\SATINT$}

The presence of integer intervals in $\LINT$ is a sensible extension as it can
express many operators and problems that (at least) are hard to express in
$\LCARD$. It is important to observe that all these operators are introduced as
$\INT$-formulas, i.e., as quantifier-free formulas. In this section we explore
the expressiveness of $\LINT$ by means of several examples while we show
examples of what kind of automated reasoning $\SATINT$ is capable of. More
examples can be found in Appendix \ref{app:operators}.

\subsection{Minimum and maximum of a set}
$\LINT$ can express the minimum and maximum of a set as a quantifier-free
formula:
\begin{align}
& \Smin(S,m) \defs m \in S \land S \subseteq [m,\_] \\
& \Smax(S,m) \defs m \in S \land S \subseteq [\_,m]
\end{align}
where ``$\_$'' stands for an anonymous variable as in Prolog. That is, if
$m$ is the minimum of $S$ then $m \in S$ and every other  element in $S$ must
be greater than $m$. This second condition is achieved by stating $S \subseteq
[m,\_]$ because there is no $x \in S$ such that $x < m$ and $x \notin [m,\_]$
given that $m$ is the minimum of $[m,\_]$.

Concerning the automated reasoning that $\SATINT$ can perform, it can easily
prove, for instance, the following propositions by proving that their negations
are unsatisfiable.
\begin{align}
& \Smin(S,m) \implies \forall x \in S: m \leq x \label{pr:min} \\[2mm]
& \Smin(S,m) \land \Smax(S,n) \implies m \leq n
\end{align}
For example, the negation of \eqref{pr:min} is the following $\LINT$ formula:
\begin{equation}
\Smin(S,m) \land x \in S \land x < m
\end{equation}
where $x$ is implicitly existentially quantified.

\subsection{The $i$-th smallest element of set}
The definition of minimum of a set can be generalized to a formula computing
the $i$-th smallest element of a set:
\begin{equation}
\begin{split}
snth(S,i,e) \defs{} & \Cup(Smin,Smax,S) \land Smin \disj Smax \\
& \land
  m \in Smin \land e \in Smin \\ & \land
  \Size(Smin,i) \land
  Smin \subseteq [m,e] \\ & \land
  Smax \subseteq [e + 1,\_]
\end{split}
\end{equation}
The formula partitions $S$ into two disjoint sets $Smin$ and $Smax$.
Intuitively, $Smin$ contains the $i$-th smallest elements of $S$ while $Smax$
contains the rest of $S$. Then $m$ is intended to be the minimum of $S$ which
actually belongs to $Smin$. Then $Smin$ is forced to hold $i$ elements
including $e$ and to be a subset of $[m,e]$. In this way we know that all the
elements of $Smin$ are between $m$ (the minimum of $S$) and $e$ (the $i$-th
smallest element of $S$).  Finally, $Smax$ is forced to be a subset of
$[e+1,\_]$ because otherwise some $x \in S \cap [m,e]$ could be put in $Smax$
and we do not want that. Note, however, that we do not require $e+1 \in Smax$.

\begin{example}
If $\SATINT$ is called as follows it binds $e$ as indicated in each case.
\begin{align*}
& snth(\{7,8,2,14\},1,e) \fun e = 2 \\
& snth(\{7,8,2,14\},2,e) \fun e = 7 \\
& snth(\{7,8,2,14\},3,e) \fun e = 8 \\
& snth(\{7,8,2,14\},4,e) \fun e = 14
\end{align*}
\qed
\end{example}

As the above example shows, $snth$ provides a logic iterator for sets whose
elements belong to a total order. Without $snth$, peeking the ``first'' element
of a set becomes totally non-deterministic as any element of the set can be the
first one. On the contrary, $snth$ provides a deterministic iterator for sets
as the $i$-th smallest element of a set is unique---if its elements belong to a
total order.  Furthermore, as $\SATINT$ is based on constraint programming,
$snth$ allows to compute the index of a given element.

\begin{example}
If $i$ is a variable, then $\SATINT$ will bind $i$ to 3 if the following query
is run: $snth(\{7,8,2,14\},i,8)$. Furthermore, if $x$ is a variable, then
$\SATINT$ will yield conditions for $x$ that make $snth(\{7,8,x,14\},3,8)$
true---specifically, $x < 7$. \qed
\end{example}

Concerning the automated reasoning that $\SATINT$ can perform, it can prove,
for instance, the following propositions by proving that their negations are
unsatisfiable.
\begin{align}
& snth(S,i_1,e_1) \land snth(S,i_2,e_2) \land i_1 < i_2 \implies e_1 < e_2 \\[2mm]
& snth(S,i,e_1) \land snth(S,i+1,e_2) \implies \lnot\exists x \in S: e_1 < x < e_2 \label{pr:snth2}
\end{align}
Observe that \eqref{pr:snth2} is basically a proof of correctness for $snth$.

The $i$-th greatest element of a set can be defined likewise.

\subsection{\label{mnlb_mxub}Partitioning of a set w.r.t. a number}
Consider a set of integer numbers $S$ and any integer $i \notin S$.  The
operator called $mxlb\_mnub$ partitions $S$ into the elements strictly below
$i$ ($L$) and those strictly above $i$ ($U$). Besides, it computes the maximum
of $L$ ($max$) and the minimum of $U$ ($min$), if they exist---either $L$ or
$U$ can be the empty set in some border cases. Hence, $mxlb\_mnub$ computes the
maximum (minimum) of the `lower' (`upper') elements of $S$ w.r.t. $i$.
\begin{equation}
\begin{split}
mxlb\_&mnub(S,i, L,max,U,min) \defs{} \\
  & \Cup(L,U,S) \land L \disj U  \land
  (max < i \land
   smax(L,max)
   \lor
   L = \emptyset
  )  \land
  (i < min \land
   smin(U,min)
   \lor
   U = \emptyset
  )
\end{split}
\end{equation}
It would be possible to remove from the interface of $mxlb\_mnub$ the
arguments $max$ and $min$, and compute them from $L$ and $U$ by calling $smax$
and $smin$. However, since $max$ and $min$ have to be computed inside
$mxlb\_mnub$ to compute $L$ and $U$ it makes sense to include $max$ and $min$
as arguments to avoid a double computation. Besides, note that $mxlb\_mnub$
fails if $i \in S$.

Concerning the automated reasoning that $\SATINT$ can perform, it can prove,
for instance, the following propositions by proving that their negations are
unsatisfiable.
\begin{align}
& mxlb\_mnub(S,i,L,max,U,min) \land smin(S,k) \land i < k \implies L = \emptyset \\[2mm]
& S \subseteq T
\land mxlb\_mnub(S,i,Ls,maxs,Us,mins)
\land mxlb\_mnub(T,i,Lt,maxt,Ut,mint)
  \implies Us \subseteq Ut
\end{align}

$mxlb\_mnub$ is a key operator used in the case study  presented in Section
\ref{casestudy} because it allows to compute the next floor to be served by the
elevator either when moving up or down.

\subsection{\label{max_int}Proper maximal intervals of a set}

Consider a set of integer numbers $S$. It might be useful to find out the
maximal  proper intervals contained in $S$. That is, we look for intervals
$[k,m] \subseteq S$ with $k < m$ such that there is no other interval in $S$
including $[k,m]$. Such intervals may represent, for instance, the longest
continuous paths in a list or graph.
\begin{equation}\label{eq:maxint}
max\_int(S,k,m) \defs
  \Cup([k,m],R,S) \land
  [k,m] \disj R \land
  k < m \land
  k - 1 \notin R \land
  m + 1 \notin R
\end{equation}

\begin{example}
If $\SATINT$ is called on $max\_int(\{5,3,8,2,4,7,1\},k,m)$ it first binds $k$
to 1 and $m$ to 5 and then to 7 and 8. It can also be called on
$max\_int(S,1,5)$ in which case it returns $S = \{1,2,3,4,5 \plus N\}$ plus
constraints forcing $[1,5]$ to be the maximal subinterval in $S$---specifically
$0 \notin N \land 6 \notin N$. \qed
\end{example}

Concerning the automated reasoning that $\SATINT$ can perform, it can prove,for
instance, the following propositions by proving that their negations are
unsatisfiable.
\begin{align}
& max\_int(S,k,m) \land [a,m] \subseteq S \implies k \leq a \\[2mm]
& a < b \land b+2 < c \land \Cup([a,b],[b+2,c],S)
 \land max\_int(S,k,m)
  \implies (k = a \land m = b \lor k = b+2 \land m = c) \label{pr:maxint1}
\end{align}
Note that if in \eqref{pr:maxint1} the left limit of the second interval is
$b+1$ then the maximal interval is $[a,c]$.

If in \eqref{eq:maxint} $k < m$ is removed, then $max\_int$ would return
solutions for the empty interval and for singleton intervals in some cases.
Clearly, $max\_int$ can be generalized to compute only intervals of a minimum
cardinality $c$ by stating $c \leq m - k + 1$ instead of $k < m$.

\section{\label{implementation}\setlog's Implementation of $\SATINT$}

$\LINT$ is implemented by extending the solver provided by the publicly
available tool \setlog (pronounced `setlog') \cite{setlog}. \setlog is a Prolog
program that can be used as a constraint solver, as  a satisfiability solver
and as a constraint logic programming language. It also provides some
programming facilities not described in this paper.

The main syntactic differences between the abstract syntax used in previous
sections and the concrete syntax used in \setlog is made evident by the
following examples.
\begin{example}\label{ex:setlogformulas}
The formula $max\_int$ given in Section \ref{max_int} is written in \setlog as
follows:
\begin{verbatim}
max_int(S,K,M) :-
  un(int(K,M),R,S) &
  disj(int(K,M),R) &
  K < M &
  K1 is K - 1 & K1 nin R &
  M1 is M + 1 & M1 nin R.
\end{verbatim}
where names beginning with a capital letter represent variables, and all others
represent constants and function symbols. As can be seen, \verb+int(K,M)+
corresponds to the integer interval $[K,M]$; \verb+&+ to $\land$;
\verb+disj(int(K,M),R)+ to $[k,m] \disj R$; and \verb+K1 is K - 1 & K1 nin R+
to $k - 1 \notin R$. \qed
\end{example}

In \setlog interval limits and the cardinality of
$\Size$-constraints can only be variables or constants. Besides, the
extensional set constructor $\w$ is encoded as \verb+{_/_}+. All
this is shown in the following example.

\begin{example}
A formula such as:
\[
\Cup(\{x \plus A\},B,[k+1,m]) \land \Size(A,p) \land \Size(B,p-3)
\]
is encoded in \setlog as follows:
\begin{verbatim}
un({X/A},B,int(K1,M)) & K1 is K + 1 & size(A,P) & size(B,P3) & P3 is P - 3.
\end{verbatim}
In other words, constraints such as \verb.un(A,B,int(K + 1,M)). or
\verb+size(A,P - 3)+ make \setlog to output an error message. \qed
\end{example}

More examples on how to use \setlog are given in Section \ref{casestudy}.

\subsection{Rewrite rules for subset, intersection and difference}

As we have said in Section \ref{derived}, subset, intersection ($\Cap$) and
difference ($diff$) are definable in terms of union and disjoint. This means
that when a formula including subset, intersection or difference is processed
it is first transformed into a $\LINT$ formula by substituting these operators
by union and disjoint. This works well from the theoretical perspective but in
practice it leads to performance penalties.

Therefore, we extend the implementation of $\SATINT$ in \setlog by including
rewrite rules for subset, intersection and difference---this follows the
implementation of \CLPSET and $\SATCARD$. As with the primitive constraints,
the rewrite rules for subset, intersection and difference are based either on
simple mathematical results (e.g., $\Cap([k,m],\e,A) \lfun A = \e$); on the application of the identity \eqref{eq:id}; or on integer
arithmetic constraints---such as rule \eqref{un:123} for $\Cup$-constraints. As
an example, Figure \ref{f:intdiff} shows rule \eqref{diff:12} for
$diff$-constraints where we rely on integer arithmetic constraints as much as
possible until the last case where rule \eqref{un:12} is called---which in turn
is based on \eqref{eq:id}. This last case can be graphically represented over
the $\num$ line as follows:
\begin{center}
\begin{tikzpicture} [shorten <=1pt,>=stealth',semithick]
\draw[<->] (-1,0) -- (10,0);
\foreach \x in {0,1,2,3,4,5,6,7,8,9} {
  \node[draw,circle,fill=black,inner sep=1pt]
       (A\x) at (\x cm,0) {};
  }
\node[label={{\LARGE\bf [}}] (k1) at (29pt,-12pt) {};
\node[label={$k$}] (k) at (1,-25pt) {};
\node[label={{\LARGE\bf ]}}] (m1) at (227pt,-12pt) {};
\node[label={$m$}] (m) at (8,-25pt) {};
\node[label={{\LARGE\bf [}}] (i1) at (86pt,-12pt) {};
\node[label={$i$}] (i) at (3,-25pt) {};
\node[label={{\LARGE\bf ]}}] (j1) at (142pt,-12pt) {};
\node[label={$j$}] (j) at (5,-25pt) {};
\draw[<->] (-1,-1.5) -- (10,-1.5);
\node[label=$A$] (a) at (0,-1.5) {};
\foreach \x in {0,1,2,3,4,5,6,7,8,9} {
  \node[draw,circle,fill=black,inner sep=1pt]
       (A\x) at (\x cm,-1.5) {};
  }
\node[label={{\LARGE\bf [}}] (k1) at (1.02,-55pt) {};
\node[label={$k$}] (k) at (1,-68pt) {};
\node[label={{\LARGE\bf ]}}] (m1) at (7.98,-55pt) {};
\node[label={$m$}] (m) at (8,-68pt) {};
\node[label={{\LARGE\bf ]}}] (i2) at (1.97,-55pt) {};
\node[label={$i-1$}] (i3) at (2,-68pt) {};
\node[label={{\LARGE\bf [}}] (j2) at (6.02,-55pt) {};
\node[label={$j+1$}] (j3) at (6.1,-68pt) {};
\end{tikzpicture}
\end{center}
Observe that, in spite that rule \eqref{diff:12} calls rule \eqref{un:12}, it
does not cause termination problems as the rules for $\Cup$-constraints do not
call rules for $diff$-constraints.

\begin{figure}
\hrule
\begin{flalign}
& diff([k,m],[i,j],A) \lfun & \label{diff:12} \\
& \quad m < k \land A = \e & \notag \\
& \quad\lor k \leq m \land j < i \land A = [k,m] & \notag \\
& \quad\lor k \leq m \land i \leq j \land m < i \land A = [k,m] & \notag \\
& \quad\lor k \leq m \land i \leq j \land j < k \land A = [k,m] & \notag \\
& \quad\lor i \leq k \leq m \leq j \land A = \e & \notag \\
& \quad\lor k \leq i \leq m \leq j \land A = [k,i-1] & \notag \\
& \quad\lor i \leq k \leq j \leq m \land A = [j+1,m] & \notag \\
& \quad\lor k \leq i \leq j \leq m \land \Cup([k,i-1],[j+1,m],A) & \notag
\end{flalign}
 \hrule
 \caption{\label{f:intdiff}A rewrite rule for $diff$-constraints involving intervals}
\end{figure}

\subsection{A memoizing schema}

\setlog processes formulas by rewriting one constraint at a time.
As we have seen, some rewrite rules apply the identity \eqref{eq:id} to
substitute an integer interval by a new variable plus some constraints. In this
way, if a given integer interval appears in two or more constraints which are
rewritten by rules that apply \eqref{eq:id}, that interval will be substituted
by different variables. The following example illustrates this.

\begin{example}\label{ex:memo}
When the last alternative of rule \eqref{un:12} is applied to the following
formula:
\begin{verbatim}
un(int(K,M),int(I,J),A) & un(int(I,J),int(K,M),B) & A neq B
\end{verbatim}
the result is a formula such as:
\begin{verbatim}
un(W,X,A) & un(Y,Z,B) & A neq B &
K =< M & I =< J &
subset(W,int(K,M)) & size(W,P1) & P1 is M - K + 1 &
subset(X,int(I,J)) & size(X,P2) & P2 is J - I + 1 &
subset(Y,int(I,J)) & size(Y,P3) & P3 is J - I + 1 &
subset(Z,int(K,M)) & size(Z,P4) & P4 is M - K + 1
\end{verbatim}
Note how, for instance, \verb+int(K,M)+ has been substituted by \verb+W+ and
\verb+Z+.  Clearly, this formula implies \verb+W = Z+ but \setlog will deduce
this after many rewriting steps. \qed
\end{example}

This rewriting schema makes some formulas unnecessarily complex and, in
general, degrades \setlog efficiency when dealing with integer intervals. In
order to avoid this problem we have implemented a memoizing schema that keeps
track of what variable has been used to substitute a given integer interval.
Then, when an integer interval is about to be substituted, \setlog looks up if
it has already been substituted and in that case it reuses the variable used in
the first substitution. In this way, any given integer interval is always
substituted by the same variable.

\begin{example}
With the memoizing schema, the formula of Example \ref{ex:memo} is
rewritten as follows:
\begin{verbatim}
un(W,X,A) & un(X,W,B) & A neq B &
subset(W,int(K,M)) & size(W,P1) & P1 is M - K + 1 &
subset(X,int(I,J)) & size(X,P2) & P2 is J - I + 1
\end{verbatim}
where  it is evident that the
two $\Cup$-constraints share the same variables \verb+W+ and \verb+X+.
\qed
\end{example}

This memoizing schema makes a linear search over a list every time
\eqref{eq:id} is applied. Since set solving (especially cardinality solving) can
be exponential in time, the memoizing schema produces a sensible gain in
efficiency. It may degrade the efficiency only in very specific cases which are
nonetheless solved quickly. \setlog with the memoizing schema solves the
formula of Example \ref{ex:memo} ten times faster than without it.

\subsection{\label{empirical}An initial empirical evaluation}

Several in-depth empirical evaluations provide evidence that \setlog is able to
solve non-trivial problems
\cite{DBLP:journals/jar/CristiaR20,DBLP:conf/RelMiCS/CristiaR18,DBLP:journals/jar/CristiaR21a,CristiaRossiSEFM13};
in particular as an automated verifier of security properties
\cite{DBLP:journals/jar/CristiaR21,DBLP:journals/jar/CristiaR21b}.

As far as we know there are no benchmarks for a language like $\LINT$. There
are a couple of benchmarks for languages performing \emph{only} interval
reasoning, i.e., intervals cannot be mixed with sets and not all set operators
are supported. These languages are meant to solve specific verification
problems---for instance, model-checking of interval temporal logic
\cite{DBLP:journals/cacm/Allen83}.

Then, besides the case study  presented in Section \ref{casestudy}, we have
gathered 60 $\LINT$ formulas stating properties of the operators defined in
Section \ref{expressiveness}---including all the properties stated in that
section. \setlog solves\footnote{These problems and the case study of Section
\ref{casestudy} were solved on a HP Elitebook with 11\textsuperscript{th} Gen Intel(R) Core(TM) i7-1165G7 at 2.80GHz with 32 Gb of main memory, running
Linux Ubuntu 24.04.4 LTS, SWI-Prolog 9.3.18-2-g304db6161 and \setlog 4.9.9-2d.} all the problems
in 30.6 seconds thus averaging 0.51 seconds per problem. Only 9 problems take
more than 1 second of which only formula \eqref{pr:snth2} takes more than 5 seconds---it takes 11 seconds.

The benchmark can be found here \url{https://www.clpset.unipr.it/SETLOG/setlog-intervals.zip}, along with instructions
to reproduce our results.

\section{\label{casestudy}Case Study}

In this section we present a case study using  the implementation of $\LINT$
and $\SATINT$ in \setlog. The intention of the case study is to show that
\setlog is useful in practice when it comes to solve
problems involving integer intervals, especially concerning automatically
discharging proof obligations. Here we present a simplified version to make the
presentation more amenable. The \setlog program can be found
in the file \texttt{lift.pl} located in the same URL indicated above.

The case study is based on the elevator problem. That is, there is an elevator
receiving service requests from the floors and from inside it. The control
software should move the elevator up and down according to the requests it must
serve. The key requirement is that the elevator shall move in one direction as
long as there are requests that can be served in that direction. In particular,
the elevator shall serve first the nearest request in the direction of
movement. We have included  in the case study requirements about stopping the
elevator, opening and closing the door, etc.

\subsection{A \setlog program}

The resulting program consists of a 180 LOC \setlog program implementing seven
operations of the elevator control software (add a request, serve next request,
close the door, start the elevator, pass by a floor, stop the elevator and open
the door). The number of LOC might look too small but this is due, in part, to
the fact that many complex operations can be written very compactly by using
set theory. The \setlog code shown below corresponds to one of the main
operations of the program, namely \verb+nextRequestUp+, which computes the next
request to be served when the elevator is moving up.
\begin{verbatim}
nextRequestUp(Lift,Lift_) :-
  Lift = [F,Nf,D,C,M,R] &
  M = up &
  diff(R,{F},R1) &
  mxlb_mnub(R1,F,_,_,Ub,Nf_) &
  (Ub neq {} & M_ = M
   or
   Ub = {} & Nf_ = Nf & M_ = none
  ) &
  Lift_ = [F,Nf_,D,C,M_,R].
\end{verbatim}

\verb+Lift+ and \verb+Lift_+ represent the before and after states of the
elevator, respectively. That is, \verb+Lift_+ plays the same role as $Lift'$ in
notations such as B and Z. As can be seen, \verb+Lift+ is a 6-tuple where each
variable holds part of the state of the system: \verb+F+ represents the floor
which the elevator is currently passing by; \verb+Nf+ is the next floor to be
served; \verb+D+ represents the elevator's door (\verb+open+ or \verb+closed+);
\verb+C+ specifies whether the elevator is \verb+moving+ or \verb+halted+;
\verb+M+ is the direction of movement (\verb+up+, \verb+down+ or \verb+none+);
and \verb+R+ is the set of requests to be served. The next state is updated in
the last line by unifying \verb+Lift_+ with \verb+[F,Nf_,D,M_,R]+ where some of
the variables are different from those used in the initial tuple. The same
naming convention is used: \verb+M+ (\verb+M_+) is the current (next) direction
of movement.

\verb+nextRequestUp+ computes the next floor to be served (\verb+Nf_+) by
calling \verb+mxlb_mnub+ (see Section \ref{mnlb_mxub}) but only paying
attention to the upper bounds (\verb+Ub+) of the requests to be served w.r.t.
the current floor. Clearly, if the elevator is moving up then it should keep
that direction unless there are no more requests in that direction. Then,
\verb+nextRequestUp+ distinguishes two cases: \verb+Ub+ is not empty and so it
takes the minimum of \verb+Ub+ as the next floor to be served by putting
\verb+Nf_+ as the second component of \verb+Lift_+; or \verb+Ub+ is empty and
so the direction of movement is changed to \verb+none+. In this last case, the
software can change the direction to \verb+down+ if \verb+R+ is not empty (this
is done by an operation called \verb+nextRequestNone+ which becomes enabled
when \verb+M = none+). The three alternatives to compute the next floor are
assembled in one operation:
\begin{verbatim}
nextRequest(Lift,Lift_) :-
  nextRequestNone(Lift,Lift_) or nextRequestUp(Lift,Lift_) or  nextRequestDown(Lift,Lift_).
\end{verbatim}

\subsection{Simulations}
With this code we can run simulations to evaluate how the system works by
setting the current state and calling some operations. Simulations are encoded
as \setlog formulas. For example:
\begin{verbatim}
Lift = [3,3,closed,halted,up,{2,5,8,1,0}] & nextRequest(Lift,Lift_).
\end{verbatim}
returns the next state:
\begin{verbatim}
Lift_ = [3,5,closed,halted,up,{2,5,8,1,0}]
\end{verbatim}
We can see that the next floor to be served is 5 because it is the nearest
requested floor going up. Instead, if the elevator is moving down we get 2 as
the next floor to be served:
\begin{verbatim}
Lift = [3,3,closed,halted,down,{2,5,8,1,0}] & nextRequest(Lift,Lift_).

Lift_ = [3,2,closed,halted,down,{2,5,8,1,0}]
\end{verbatim}
It is also possible to call more than one operation\footnote{\texttt{addRequest(0,20,Lift1,4,Lift2)} states that the elevator runs in a building with floors numbered from 0 to 20 and a request to the 4th floor is added.}:
\begin{verbatim}
Lift1 = [3,3,closed,halted,up,{2,5,8,1,0}] &
addRequest(0,20,Lift1,4,Lift2) & nextRequest(Lift2,Lift3).

Lift2 = [3,3,closed,halted,up,{2,5,8,1,0,4}], Lift3 = [3,4,closed,halted,up],{2,5,8,1,0,4}]
\end{verbatim}
Note that \setlog produces the state trace. Since a request to the 4th floor
has been added, the next floor to be served is the 4th. If \verb+addRequest+
adds the 12th floor, the next floor to be served would be the 5th.

\subsection{Automated proofs}
Being able to run simulations on
the code is good to have a first idea on how the system works, but this cannot
guarantee the program is correct. If we need stronger evidence on the
correctness of the program we should try to prove some properties true of it.
In this section we show that we can use the \emph{same representation} of the
control software and the \emph{same tool} that we have used to run simulations
(i.e., \setlog), also to automatically prove properties of it.

In this context, one of the canonical class of properties to be proved are
state invariants. Therefore, we have stated 7 state invariants and we have used
\setlog to prove that all the state operations preserve all of them. This
amounts to automatically discharge 49 invariance lemmas---plus 7 proving that
the initial state satisfies all the state invariants. \setlog discharges all
these proof obligations in about half of a second---on a standard laptop
computer. The state invariants and the lemmas are encoded with 900 LOC of
\setlog code.

As with simulations, state invariants and invariance lemmas are encoded as
\setlog formulas. Just to give an idea of what is this all about, in Figure
\ref{f:inv} we reproduce one state invariant and in Figure \ref{f:lemma} one
invariance lemma. Both figures present a mathematical encoding and the
corresponding \setlog encoding. As we have explained, logical implication has
to be encoded as disjunction. In Figure \ref{f:lemma}, the \setlog encoding
corresponds to the negation of the mathematical encoding, given that \setlog
proves a lemma by proving that its negation is unsatisfiable. Then,
\verb+n_liftInv3+ is the Boolean negation of \verb+liftInv3+.

\begin{figure}
 \hrule
\[
liftInv3(Lift) \defs
Lift.Di = up \land Lift.C = moving \implies Lift.F \leq Lift.Nf
\]
\begin{verbatim}
liftInv3(Lift) :- Lift = [F,Nf,D,C,Di,R] & (Di neq up or C neq moving or F =< Nf).
\end{verbatim}
 \hrule
\caption{\label{f:inv}A typical state invariant}
\end{figure}

\begin{figure}
 \hrule
\[
nextRequest\_PI\_inv3 \defs
liftInv3(Lift) \land nextRequest(Lift,Lift') \implies liftInv3(Lift')
\]
\begin{verbatim}
nextRequest_PI_inv3 :- liftInv3(Lift) & nextRequest(Lift,Lift_) & n_liftInv3(Lift_).
\end{verbatim}
 \hrule
\caption{\label{f:lemma}A typical invariance lemma}
\end{figure}

It is important to observe that if a formula that is supposed to be a lemma is
run on \setlog but it happens not to be valid, then \setlog will return a
counterexample.

\begin{example}
 If in \verb+liftInv3+ the inequality \verb+F =< Nf+ is changed to \verb+F < Nf+,
then operation \verb+passFloor+ will not preserve that invariant because this
operation can increment \verb+F+ in one. Then, the formula to be run
is:
\begin{verbatim}
Lift = [F,Nf,D,C,Di,R] &
(Di neq up or C neq moving or F < Nf) &
passFloor(MinF,MaxF,Lift,Lift_) &
Lift_ = [F_,Nf_,D_,C_,Di_,R_] &
Di_ = up & C_ = moving & F_ >= Nf_.
\end{verbatim}
In this case \setlog returns a counterexample stating that:
\begin{verbatim}
F_ = Nf, Nf_ = Nf, Nf is F + 1
\end{verbatim}
That is, initially, \verb+F < Nf+ but \verb+passFloor+ increments \verb+F+ in
one and `assigns' this value to \verb+F_+ while leaving \verb+Nf+ unchanged.
Then, after \verb+passFloor+ has executed the current floor can be equal to the
next floor to be served.

More concrete counterexamples can be obtained by changing the default integer
solver to CLP(FD) (\setlog command \verb+int_solver(clpfd)+), although
in that case the user has to give values for \verb+MinF+ and \verb+MaxF+ and
has to state that \verb+F+ ranges between those limits, i.e.,
\verb+F in int(MinF,MaxF)+. \qed
\end{example}

The case study provides evidence that \setlog can deal with verification
problems involving integer intervals by providing
simulation and proving capabilities over the same representation of the system.

\section{\label{relwork}Related Work}

Tools such as Atelier B \cite{Mentre00} and ProB \cite{Leuschel00} are very
good in performing automated reasoning and a variety of analysis over B
specifications. B specifications are based on a set theory including $\LINT$.
We are not aware of these tools implementing a decision procedure for that
fragment of set theory. Integrating \setlog into these tools would constitute a
promising line of work.

There are a number of works dealing with constraints admitting integer
intervals but where their limits are constants (e.g.,
\cite{DBLP:journals/constraints/HarveyS03,DBLP:journals/constraints/AptZ07}).
For this reason, in these approaches full automated reasoning is not possible.
Some of these approaches accept non-linear integer constraints. In general,
they aim at a different class of problems, notably constraint programming to
solve hard combinatorial problems.

Allen \cite{DBLP:journals/cacm/Allen83} defines an interval-based temporal
logic which later on has been widely studied (e.g.,
\cite{DBLP:journals/iandc/BozzelliMMP20,DBLP:journals/jacm/KrokhinJJ03}). In
Allen's logic, intervals are defined over the real line. This logic can be
expressed as a relation algebra \cite{DBLP:conf/ijcai/Ladkin87}. The algebra
has been proved to be decidable. Answer Set Programming (ASP) is closely
related to CLP. Janhunen and Sioutis \cite{DBLP:conf/inap/JanhunenS19} use ASP
to solve problems expressed in Allen's interval algebra over the rational line.

A possible abstract representation of an array is as a function over the
integer interval $[1,n]$ where $n$ is the array length\footnote{The interval
can also be $[0,n-1]$ depending on the convention used for array indexes.};
then, $[1,n]$ becomes the array domain. The point here is that research on the
theory of arrays sometimes needs to solve problems about integer intervals. For
instance, Bradley et al. \cite{DBLP:conf/vmcai/BradleyMS06} define a decision
procedure for a fragment of the theory of arrays that allows to reason about
properties holding for the array components with indexes in $[k,m]$. \setlog
might help in that context. For example, if we have proved that the components
of array $A$ with indexes in two sets, $I$ and $J$, verify some property we may
want to prove that the property holds for the whole array by proving that
$\Cup(I,J,[1,n])$. It will be interesting to investigate whether or not the
solving capabilities of \setlog concerning partial functions
\cite{DBLP:journals/jar/CristiaR20} combined with the results of this paper
could deal with the decidable fragment found by Bradley et al.

Also motivated by research on formal verification of programs with arrays,
Eriksson and Parsa \cite{DBLP:conf/padl/ErikssonP20} define a domain specific
language for integer interval reasoning. They pay particular attention to
partition diagrams that divide an array domain into several (disjoint) integer
intervals. Different properties of an array are true of each interval in the
diagram. The DSL has been prototyped in the Why3 platform.

Integer interval reasoning is also used in static program analysis---sometimes
in connection with arrays. Su and Wagner \cite{DBLP:journals/tcs/SuW05} present
a polynomial time algorithm for a general class of integer interval
constraints. In this work, the authors define a lattice of intervals and
interval constraints are defined over the partial order of the lattice. The
language allows for infinite intervals where limits can be $\pm\infty$.

After reviewing these works we can draw two conclusions. First, some have
addressed the problem of reasoning about languages where the only available
sets are intervals
\cite{DBLP:journals/cacm/Allen83,DBLP:conf/padl/ErikssonP20,DBLP:journals/tcs/SuW05}.
Without other kinds of sets it is not possible to define and reason about the
operators discussed in Section \ref{expressiveness}. Instead, $\LINT$ can, for
instance, express all the interval relations defined by Allen, over the integer
line. The second conclusion is that no approach to interval reasoning seems to
be rooted in set theory. On the contrary, the extension of \setlog to integer
intervals seamlessly integrate interval reasoning with set reasoning allowing
to freely combine sets with intervals. This makes our approach more general and
coherent, perhaps paying the price of a reduced efficiency when it comes to
specific problems.


\section{\label{conclusions}Conclusions and Future Work}

We have presented a language and a decision procedure for the algebra of finite
sets extended with cardinality constraints and finite integer intervals. As far
as we know, this is the first time that a language with such features is proved
to be decidable. The implementation of the decision procedure as part of the
\setlog tool has also been presented. Initial empirical evidence showing that
\setlog could be used in practice is available.

\setlog supports complex relational constraints such as composition, domain and
domain restriction \cite{DBLP:journals/jar/CristiaR20}. When integer intervals
are combined with relational and cardinality constraints it is possible to
model (finite) arrays:
\[
array(A,n) \defs 0 < n \land \Size(A,n) \land dom(A,D) \land D \subseteq [1,n] \land \Size(D,n)
\]
where $n$ is the length of array $A$. That is, $A$ is a function (set of
ordered pairs) whose domain is the integer interval $[1,n]$---note that the
cardinality of $A$ and its domain $D$ is the same and $D = [1,n]$ due to
\eqref{eq:id}. Therefore, our next step is to investigate what are the
decidable fragments concerning arrays. This would yield a powerful tool to work
on the automated verification of programs with arrays.

\bibliographystyle{elsarticle-num}
\bibliography{/home/mcristia/escritos/biblio}

\appendix

\section{\label{app:rulesun}More Rewrite Rules for $\Cup$-constraints}

Below some more rewrite rules included in $\mathsf{STEP_{S\INT}}$ for
$\Cup$-constraints are listed. $\mathsf{STEP_{S\INT}}$ also includes rules
symmetric to rules \eqref{un:1} and \eqref{un:13} when the interval $[k,m]$ is
the second argument and not the first.

\vspace{3mm}

\noindent
\parbox{\textwidth}{%
\hrule
\begin{flalign}
& \Cup([k,m],A,B) \lfun
 (m < k \land A = B)
 \lor (k \leq m \land \dot{N} \subseteq [k,m] \land \Size(\dot{N},m-k+1) \land \Cup(\dot{N},A,B)) \label{un:1} \\[2mm]
& \Cup([k,m],A,[i,j]) \lfun & \label{un:13} \\
& \quad j < i \land [k,m] = A = \e & \notag \\
& \quad \lor i \leq j \land m < k \land A = [i,j] & \notag \\
& \quad \lor k \leq m \land i \leq j
 \land \dot{N}_1 \subseteq [k,m] \land \Size(\dot{N}_1,m-k+1)
 \land \dot{N}_2 \subseteq [i,j] \land \Size(\dot{N}_2,j-i+1)
 \land \Cup(\dot{N}_1,A,\dot{N}_2) & \notag
\end{flalign}
 \hrule
}

\section{\label{app:proofs}Proofs}
This section contains the proofs of some results referred in the main document.

The proof of the fundamental identity \eqref{eq:id}.
\begin{lemma}
If $A$ is any finite set, then:
\[
\forall k,m \in \num:
k \leq m \implies (A = [k,m] \iff A \subseteq [k,m] \land \card{A} = m - k + 1)
\]
\end{lemma}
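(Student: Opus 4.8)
The plan is to prove the biconditional under the standing hypothesis $k \leq m$ by handling its two directions separately, after first isolating a counting fact about the interval itself. The fact I would establish first is that, whenever $k \leq m$, the set $[k,m] = \{p \in \num \mid k \leq p \leq m\}$ is finite with $\card{[k,m]} = m - k + 1$. This is a routine induction on the natural number $m - k$: the base case $m - k = 0$ gives the singleton $[k,k] = \{k\}$, whose cardinality is $1 = m-k+1$; the inductive step writes $[k,m+1] = [k,m] \cup \{m+1\}$, observes that $m+1 \notin [k,m]$, and concludes that the cardinality grows by exactly one. I will treat this, together with additivity of cardinality over disjoint unions, as the basic arithmetic facts about finite sets on which the rest of the argument rests.

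The forward direction is then immediate: if $A = [k,m]$ then $A \subseteq [k,m]$ by reflexivity of $\subseteq$, and $\card{A} = \card{[k,m]} = m-k+1$ by the counting fact above. For the backward direction, I would assume $A \subseteq [k,m]$ and $\card{A} = m-k+1$. Since $A \subseteq [k,m]$, the interval decomposes as the disjoint union $[k,m] = A \cup ([k,m] \setminus A)$, so additivity of cardinality gives $\card{[k,m]} = \card{A} + \card{[k,m] \setminus A}$. Using $\card{A} = m-k+1 = \card{[k,m]}$, I obtain $\card{[k,m] \setminus A} = 0$, hence $[k,m] \setminus A = \emptyset$; combined with $A \subseteq [k,m]$ this yields $A = [k,m]$, as required.

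I do not expect a genuine obstacle here, since the argument is elementary. The only point requiring care is that the two load-bearing facts---the value $\card{[k,m]} = m-k+1$ and the additivity of cardinality over disjoint unions---must be available as lemmas about finite (hereditarily finite) sets in the underlying formalism, so that the step ``inclusion together with equal finite cardinality forces equality'' is fully justified rather than merely intuitive.
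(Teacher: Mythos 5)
Your proof is correct and follows essentially the same route as the paper's: the forward direction is immediate, and the backward direction rests on the fact that a subset of a finite set with the same cardinality must equal it. You simply make explicit two facts the paper treats as known---the value $\card{[k,m]} = m-k+1$ (by induction on $m-k$) and the disjoint-union additivity argument showing $[k,m] \setminus A = \emptyset$---which adds rigor but no new idea.
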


\begin{proof}
Assuming $k \leq m$ the proof of:
\[
A = [k,m] \implies A \subseteq [k,m] \land \card{A} = m - k + 1
\]
is trivial. Note that without assuming $k \leq m$ the result is not true because if $m < k$ then $\card{[k,m]} = 0$ while $m - k + 1$ is not necessarily $0$.

Now, assuming $k \leq m$ the proof of:
\[
A \subseteq [k,m] \land \card{A} = m - k + 1 \implies A = [k,m]
\]
is also simple because knowing that $\card{[k,m]} = m - k + 1$ then $A$ and $[k,m]$ are two sets of the same cardinality with one of them being a subset of the other. This implies the two sets are indeed the same. Then, $A = [k,m]$.
\end{proof}

Proof of equisatisfiability of rule \eqref{un:123}.

\begin{lemma}
\begin{flalign}
& \Cup([k,m],[i,j],[p,q]) \iff & \notag \\
& \quad m < k \land [i,j] = [p,q] & \tag*{{\footnotesize[1st]}} \\
& \quad \lor j < i \land [k,m] = [p,q] & \tag*{{\footnotesize[2nd]}} \\
& \quad \lor k \leq m \land i \leq j \land k \leq i \land i \leq m+1 \land m \leq j \land p = k \land q = j & \tag*{{\footnotesize[3th]}} \\
& \quad \lor k \leq m \land i \leq j \land k \leq i \land i \leq m+1 \land j < m \land p = k \land q = m & \tag*{{\footnotesize[4th]}} \\
& \quad \lor k \leq m \land i \leq j \land i < k \land k \leq j+1 \land m \leq j \land p = i \land q = j & \tag*{{\footnotesize[5th]}} \\
& \quad \lor k \leq m \land i \leq j \land i < k \land k \leq j+1 \land j < m \land p = i \land q = m & \tag*{{\footnotesize[6th]}}
\end{flalign}
\end{lemma}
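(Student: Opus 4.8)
The plan is to prove the biconditional by establishing each direction separately, using the six disjuncts as an exhaustive case split driven by which of the two intervals is empty and, when both are non-empty, by their relative positions on the integer line. Throughout I would read $\Cup([k,m],[i,j],[p,q])$ as the set identity $[k,m] \cup [i,j] = [p,q]$, keeping in mind that $[p,q]$ is by definition an integer interval.

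For the direction $\Leftarrow$ I would assume one of the six disjuncts holds and verify directly that $[k,m] \cup [i,j] = [p,q]$. The first two disjuncts are immediate: $m < k$ forces $[k,m] = \e$, so the union reduces to $[i,j] = [p,q]$, and symmetrically $j < i$ gives the second case. For the remaining four, the hypotheses $k \leq m$ and $i \leq j$ guarantee both intervals are non-empty, and the side conditions pin down both the leftmost and the rightmost point of the union. For instance, in the third disjunct $k \leq i \leq m+1$ says the two intervals overlap or are \emph{adjacent} (when $i = m+1$ the integer $m$ is immediately followed by the start of $[i,j]$), while $m \leq j$ says $[i,j]$ reaches at least as far right; hence the union has no hole and equals $[k,j] = [p,q]$. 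The other three are checked the same way, reading off $p = \min(k,i)$ and $q = \max(m,j)$.

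For the direction $\Rightarrow$ I would assume $[k,m] \cup [i,j] = [p,q]$ and run the symmetric analysis. If $m < k$ the left interval is empty and the union equals $[i,j] = [p,q]$, yielding the first disjunct; $j < i$ gives the second. When both intervals are non-empty I would split first on $k \leq i$ versus $i < k$ to locate the smaller left endpoint, and then on $m \leq j$ versus $j < m$ to locate the larger right endpoint, producing the four remaining disjuncts and fixing $p$ and $q$ accordingly. The crucial point is that, since the union is assumed to equal the interval $[p,q]$, the two source intervals cannot be separated by a gap; this ``no gap'' condition is exactly $i \leq m+1$ in the branch $k \leq i$ (and $k \leq j+1$ in the branch $i < k$), which is precisely the side condition appearing in rule \eqref{un:123}.

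The main obstacle I anticipate is justifying the non-strict bound $i \leq m+1$ rather than $i \leq m$, the step flagged as [\textsf{in Z}] in the rule, where the discreteness of the integers is essential: over the reals $[k,m] \cup [m+1,j]$ would leave the open gap between $m$ and $m+1$, but over the integers there is no value strictly between $m$ and $m+1$, so $[k,m] \cup [m+1,j] = [k,j]$ is genuinely an interval. I would make this rigorous by contradiction: if $k \leq i$ and $i > m+1$, then $m+1$ is an integer satisfying $p = k \leq m < m+1 \leq q$, so $m+1 \in [p,q]$, yet $m+1 \notin [k,m]$ and $m+1 < i$ so $m+1 \notin [i,j]$, contradicting $[k,m]\cup[i,j] = [p,q]$. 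This forces $i \leq m+1$ and closes the gap; the symmetric argument handles the branch $i < k$. All remaining verifications reduce to elementary comparisons of integer endpoints and carry no further difficulty.
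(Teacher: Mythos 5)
Your proposal is correct and follows essentially the same route as the paper's proof: the first two disjuncts handle the empty-interval cases, symmetry between $k \leq i$ and $i < k$ splits the remaining work in half, and the case analysis on relative endpoint positions over the integer line turns on the same adjacency/discreteness point at $i = m+1$ (the paper's [\textsf{in Z}] cases, which it argues with pictures of the gap, adjacent, overlapping and contained configurations). Your explicit contradiction argument for the ``no gap'' condition in the forward direction is a slightly more rigorous rendering of what the paper presents geometrically, but it is the same decomposition and the same key observation.
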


\begin{proof}
The proof of the first two branches is trivial.

The proof of the 3rd and 4th branches is symmetric to the proof of the 5th and 6th branches. This symmetry comes from considering whether $[k,m]$ is at the left of $[i,j]$ or vice versa. In turn, this is expressed by stating $k \leq i$ in the 3rd and 4th branches and $i < k$ in the 5th and 6th.

Then, we will only prove the 3rd and 4th branches with the help of a geometric argument over the $\num$ line. Indeed if $[k,m]$ is at the left of $[i,j]$ we have the following cases.

The first case is depicted as follows.
\begin{center}
\begin{tikzpicture} [shorten <=1pt,>=stealth',semithick]
\draw[<->] (-1,0) -- (10,0);
\foreach \x in {0,1,2,3,4,5,6,7,8,9} {
  \node[draw,circle,fill=black,inner sep=1pt]
       (A\x) at (\x cm,0) {};
  }
\node[label={{\LARGE\bf [}}] (k1) at (1,-12pt) {};
\node[label={$k$}] (k) at (1,-25pt) {};
\node[label={{\LARGE\bf ]}}] (m1) at (3,-12pt) {};
\node[label={$m$}] (m) at (3,-25pt) {};
\node[label={$x$}] (m) at (4,0) {};
\node[label={{\LARGE\bf [}}] (i1) at (5,-12pt) {};
\node[label={$i$}] (i) at (5,-25pt) {};
\node[label={{\LARGE\bf ]}}] (j1) at (8,-12pt) {};
\node[label={$j$}] (j) at (8,-25pt) {};
\end{tikzpicture}
\end{center}
In this case the union of $[k,m]$ and $[i,j]$ cannot yield an integer interval because there is a hole ($x$) in between them. This case is avoided by stating $i \leq m+1$ in both branches.

The second case is depicted as follows.
\begin{center}
\begin{tikzpicture} [shorten <=1pt,>=stealth',semithick]
\draw[<->] (-1,0) -- (10,0);
\foreach \x in {0,1,2,3,4,5,6,7,8,9} {
  \node[draw,circle,fill=black,inner sep=1pt]
       (A\x) at (\x cm,0) {};
  }
\node[label={{\LARGE\bf [}}] (k1) at (1,-12pt) {};
\node[label={$k$}] (k) at (1,-25pt) {};
\node[label={{\LARGE\bf ]}}] (m1) at (4,-12pt) {};
\node[label={$m$}] (m) at (4,-25pt) {};
\node[label={{\LARGE\bf [}}] (i1) at (5,-12pt) {};
\node[label={$i$}] (i) at (5,-25pt) {};
\node[label={{\LARGE\bf ]}}] (j1) at (8,-12pt) {};
\node[label={$j$}] (j) at (8,-25pt) {};
\end{tikzpicture}
\end{center}
In this case the union of $[k,m]$ and $[i,j]$ is equal to $[k,j]$; that is, $p
= k$ and $q = j$. This case is covered in the 3rd branch when $i = m+1$. Note
that this case is only valid on the $\num$ line because there are no
\emph{integer} numbers between $m$ and $m+1$ (i.e., $i$).

The third case is depicted as follows.
\begin{center}
\begin{tikzpicture} [shorten <=1pt,>=stealth',semithick]
\draw[<->] (-1,0) -- (10,0);
\foreach \x in {0,1,2,3,4,5,6,7,8,9} {
  \node[draw,circle,fill=black,inner sep=1pt]
       (A\x) at (\x cm,0) {};
  }
\node[label={{\LARGE\bf [}}] (k1) at (1,-12pt) {};
\node[label={$k$}] (k) at (1,-25pt) {};
\node[label={{\LARGE\bf ]}}] (m1) at (6,-12pt) {};
\node[label={$m$}] (m) at (6,-25pt) {};
\node[label={{\LARGE\bf [}}] (i1) at (5,-12pt) {};
\node[label={$i$}] (i) at (5,-25pt) {};
\node[label={{\LARGE\bf ]}}] (j1) at (8,-12pt) {};
\node[label={$j$}] (j) at (8,-25pt) {};
\end{tikzpicture}
\end{center}
In this case the union of $[k,m]$ and $[i,j]$ is again equal to $[k,j]$; that
is, $p = k$ and $q = j$. This case is covered also in the 3rd branch when $i <
m$.

The fourth and last case is depicted as follows.
\begin{center}
\begin{tikzpicture} [shorten <=1pt,>=stealth',semithick]
\draw[<->] (-1,0) -- (10,0);
\foreach \x in {0,1,2,3,4,5,6,7,8,9} {
  \node[draw,circle,fill=black,inner sep=1pt]
       (A\x) at (\x cm,0) {};
  }
\node[label={{\Large\bf [}}] (k1) at (1,-12pt) {};
\node[label={$k$}] (k) at (1,-25pt) {};
\node[label={{\Large\bf ]}}] (m1) at (6,-12pt) {};
\node[label={$m$}] (m) at (6,-25pt) {};
\node[label={{\Large\bf [}}] (i1) at (3,-12pt) {};
\node[label={$i$}] (i) at (3,-25pt) {};
\node[label={{\Large\bf ]}}] (j1) at (5,-12pt) {};
\node[label={$j$}] (j) at (5,-25pt) {};
\end{tikzpicture}
\end{center}
In this case the union of $[k,m]$ and $[i,j]$ is equal to $[k,m]$; that is, $p
= k$ and $q = m$. This case is covered in the 4th branch by stating $j < m$.
\end{proof}


The next is the proof of Theorem \ref{satisf}.

\begin{proof}
As we have analyzed at the end of Section \ref{discussion}, when there are no integer intervals in the input formula $\Phi$, $\SATINT$ behaves exactly as $\SATCARD$.
Hence, for such formulas the theorem is proved elsewhere \cite{DBLP:journals/tplp/CristiaR23}.

Now, consider an input formula $\Phi$ containing at least one variable-interval.
Hence, at the end of the main loop of Algorithm \ref{glob} we have $\Phi \defs \Phi_{\CARD} \land \Phi_{\subseteq[\,]}$, where $\Phi_{\CARD}$ is a $\CARD$-formula and $\Phi_{\subseteq[\,]}$ is a conjunction of constraints of the form $\dot{X} \subseteq [k,m]$ with $k$ or $m$ variables.
As can be seen in Algorithm \ref{glob}, $\Phi_{\CARD}$ is divided into $\Phi_1$ and $\Phi_2$.

If $\STEPCARD(\Phi_1,Min)$ fails it is because $\Phi_1$ is unsatisfiable.
If $\Phi_1$ is unsatisfiable, $\Phi$ is unsatisfiable just because $\Phi \defs \Phi_1 \land \Phi_2 \land \Phi_{\subseteq[\,]}$.

On the other hand, if $\STEPCARD(\Phi_1,Min)$ succeeds Algorithm \ref{glob} iterates over all the minimum solutions making the following call in each iteration:
\begin{equation}
\mathsf{step\_loop}(\Phi_1 \land \Phi_2 \land \Phi_{\subseteq\INT} \land m_1 = c_1 \land \dots \land m_k=c_k) \label{eq:call}
\end{equation}
where $m_i$ are the second arguments of the $\Size$-constraints in $\Phi_1$ and $c_i$ the corresponding minimum values as given by the current minimum solution.
We can think that initially $\mathsf{STEP_{S\INT}}$ substitutes
$m_i$ by $c_i$ in the rest of the formula:
\begin{equation*}
(\Phi_1 \land \Phi_2 \land \Phi_{\subseteq[\,]})[\forall i \in 1 \upto n: m_i \gets c_i]
\end{equation*}
This means that all $\Size$-constraints become $\Size(\dot{A}_j,c_j)$,
with $c_j$ constant. In a second iteration $\mathsf{STEP_{S\INT}}$ applies rule \eqref{size:const3}:
\begin{equation}
\Size(\dot{A}_i,c_i) \lfun \dot{A}_i = \{n_1^i,\dots,n_{c_i}^i\} \land
\bigwedge_{p=1}^{v_i} \bigwedge_{q=1}^{v_i} n_p^i \neq n_q^i
\tag*{\footnotesize[if $p \neq q$]}
\end{equation}
where $n_1^i,\dots,n_{c_i}^i$ are all new variables.
Hence, all $\Size$-constraints are eliminated from the formula. 
Furthermore, if any such $\dot{A}_i$ is at the left-hand side of a constraint in $\Phi_{\subseteq[\,]}$, $\mathsf{STEP_{S\INT}}$ will perform the following substitution:
\begin{equation*}
\Phi_{\subseteq[\,]}[\forall j \in 1 \upto k: \dot{A}_j \gets \{n_1^j,\dots,n_{c_j}^j\}]
\end{equation*}

Now, $\mathsf{STEP_{S\INT}}$ applies rule \eqref{un:subsetext} to each
constraint in $\Phi_{\subseteq[\,]}$ where the substitution has been performed.
Then, all those constraints where \eqref{un:subsetext} is applied will be
rewritten into a conjunction of integer constraints:
\begin{equation}\label{eq:subsetint}
\{n_1^i,\dots,n_{v_i}^i\} \subseteq [k^i,m^i] \lfun \bigwedge_{p=1}^{v_i} k^i
\leq n_p^i \leq m^i
\end{equation}
This implies that all the remaining constraints in
$\Phi_{\subseteq[\,]}$ are of the form $\dot{A} \subseteq [k,m]$ and
there is no $\Size$-constraint with $\dot{A}$ as first argument.

At this point, the resulting formula is either $\false$ or a conjunction of:
\emph{i)} a \CLPSET formula in solved form; \emph{ii)} a linear integer formula
and; \emph{iii)} a conjunction of constraints of the form $\dot{A} \subseteq
[k,m]$, with $k$ or $m$ variables, and where there is no $\Size$-constraint
with $\dot{A}$ as first argument. The case when the resulting formula is
$\false$ will be analyzed below. Then, let us consider the case when the
resulting formula is a conjunction of \emph{i}-\emph{iii}. The linear integer
formula is satisfiable because $\mathsf{STEP_{S\INT}}$ calls CLP(Q) at each
iteration. The \CLPSET formula in solved form is satisfiable due to results
presented elsewhere \cite{Dovier00}. Furthermore, a solution can be obtained by
substituting all set variables by the empty set. Finally, the third conjunct is
also satisfiable by substituting all set variables by the empty set. That is, a
constraint of the form $\dot{A} \subseteq [k,m]$ has as a solution $\dot{A} =
\e$. This solution does not conflict with other occurrences of $\dot{A}$ in the
\CLPSET formula because, there, it is substituted by the empty set, too.
Therefore, we can conclude that the input formula, $\Phi$, is satisfiable.
Basically, we have proved that the minimum solution of $\Phi_1$ is a solution
of $\Phi_1 \land \Phi_2 \land \Phi_{\subseteq[\,]}$.

Now, we will analyze the case when \eqref{eq:call} returns $\false$ for some minimum solution.
Clearly, if this call answers $\false$ it might be due to the minimum solution
of $\Phi_1$ not being a solution for $\Phi_1 \land \Phi_2 \land
\Phi_{\subseteq[\,]}$.
If $\Phi_1 \land \Phi_2 \land \Phi_{\subseteq[\,]}$ is
satisfiable there should be other solutions.
We will prove that if a minimum solution of $\Phi_1$ is not a solution of $\Phi_1 \land \Phi_2 \land \Phi_{\subseteq[\,]}$, then no \emph{larger solution} is a solution of the formula being analyzed.
If $\bigwedge_{i=1}^n m_i = c_i$ is a minimum solution, then a \emph{larger solution} assigns $c_i$ to $m_i$ for all $i \in [1,n]$ except for at least one $m_{i_0}$ ($i_0 \in [1,n]$) that is bound to $c_{i_0} + 1$---a larger solution can assign any $c'_{i_0}$ to $m_{i_0}$ with $c'_{i_0} > c_{i_0}$ but it suffices to consider $c_{i_0} + 1$.

Then, we assume that the following is $\false$:
\begin{equation}\label{eq:notminsol}
\Phi_1 \land (\bigwedge_{i=1}^n m_i = c_i) \land \Phi_2 \land \Phi_{\subseteq[\,]}
\end{equation}
On the other hand:
\begin{equation}\label{eq:notminsol2}
\Phi_1 \land (\bigwedge_{i=1}^n m_i = c_i) \land \Phi_2
\end{equation}
is satisfiable due to the properties of $\SATCARD$---that is,  if
$\bigwedge_{i=1}^n m_i = c_i$ is a solution of $\Phi_1$ is also a solution of
$\Phi_1 \land \Phi_2$. Then, \eqref{eq:notminsol} becomes unsatisfiable due to
interactions with $\Phi_{\subseteq[\,]}$.

As we have seen in the first part of the proof, when a minimum solution is
propagated in $\Phi_{\subseteq[\,]}$ rule \eqref{un:subsetext} is applied, at
least, to some of its $\subseteq$-constraints. Hence, a $\subseteq$-constraint
in $\Phi_{\subseteq[\,]}$ either remains unchanged or is rewritten as in
\eqref{eq:subsetint}. If the $\subseteq$-constraint remains unchanged, then a
larger solution than the minimum one will not make this $\subseteq$-constraint
to be rewritten, and so a larger solution will not change the satisfiability of
\eqref{eq:notminsol}. Now, if the $\subseteq$-constraint is rewritten as in
\eqref{eq:subsetint} we know that:
\begin{enumerate}
  \item All the $n_p^i$ are of sort $\sInt$.
  \item All the $n_p^i$ are such that $k^i \leq n_p^i \leq m^i$.
  \item There are at least $c_i$ integer numbers in $[k^i,m^i]$---because
  all the $n_p^i$ are different from each other.
\end{enumerate}
Therefore, if \eqref{eq:notminsol} is unsatisfiable when \eqref{eq:notminsol2}
is satisfiable it must be because $\Phi_1 \land \Phi_2$ implies at least one of
the following (for some $x$, $p$ and $i$ such that $A_i \subseteq [k^i,m^i]$):
\begin{enumerate}
  %
  \item $x \in \dot{A}_i \land (x < k^i \lor m^i < x)$, when $\dot{A}_i \subseteq [k^i, m^i]$ making the whole formula trivially unsatisfiable.

  \item $\Size(\dot{A}_i,q_i) \land m^i-k^i+1 < q_i$ 

  $m^i-k^i+1 < q_i$, $\Size(\dot{A}_i,q_i) \land \dot{A}_i \subseteq [k^i,m^i]$ is trivially unsatisfiable. 
  %

  %
  \item $0 \leq p \land (\bigwedge_{j=1}^{p+1} x_j \notin \dot{A}_i \land
  k^i \leq x_j \leq m^i) \land \Size(\dot{A}_i,m^i-k^i+1 - p)$
  \begin{enumerate}
    \item\label{i:p+1} $\bigwedge_{j=1}^{p+1} k^i \leq x_j \leq m^i$ implies
    that there are $p+1$ elements in $[k^i,m^i]$.
    \item\label{i:m-k+1-p} $\Size(\dot{A}_i,m^i-k^i+1-p) \land \dot{A}_i \subseteq [k^i,m^i]$
    implies that there are $m^i-k^i+1-p$ elements in $[k^i,m^i]$.
    \item $\bigwedge_{j=1}^{p+1} x_j \notin \dot{A}_i$ implies that the $p+1$
    elements of \ref{i:p+1} are disjoint from the $m^i-k^i+1-p$ elements of \ref{i:m-k+1-p}.
    \item Then, there are $p+1+m^i-k^i+1-p = m^i-k^i+2$ elements in $[k^i,m^i]$
    when its cardinality is $m^i-k^i+1$.
  \end{enumerate}
  \end{enumerate}

If any of the above makes \eqref{eq:notminsol} unsatisfiable, any solution
larger than the considered minimum solution will not change that because (1)-(3) are implied by $\Phi_1
\land \Phi_2$ and so any of its solutions will imply the same.
\end{proof}


The next is the proof of Theorem \ref{termination-glob}.

\begin{proof}
Termination of $\SATINT$ is a consequence of: \emph{a)} termination of
$\SATCARD$ \cite[Theorem 3]{DBLP:journals/tplp/CristiaR23}; \emph{b)} the fact that there's a finite number of minimum solutions; \emph{c)} the
individual termination of each new rewrite rule added to
$\mathsf{STEP_{S\INT}}$; and \emph{d)} the collective termination of all the
rewrite rules of $\mathsf{STEP_{S\INT}}$.

Assuming \emph{c)} and \emph{d)}, the same arguments used in \cite[Theorem
3]{DBLP:journals/tplp/CristiaR23} can be applied to Algorithm \ref{glob}.
That is, Algorithm \ref{glob} uses $\mathsf{STEP_{S\INT}}$ instead of the
$\mathsf{STEP_S}$ procedure used by $\SATCARD$ and adds the \textbf{then}
branch after the main loop. $\mathsf{STEP_{S\INT}}$ differs from
$\mathsf{STEP_S}$ in the new rewrite rules introduced in Section \ref{satcard}.
Therefore, it is enough to prove that $\mathsf{STEP_{S\INT}}$ terminates as
$\mathsf{STEP_S}$ does. In turn, this entails to prove \emph{c)} and
\emph{d)}---as done when the termination of \CLPSET and $\SATCARD$ were proved.

Concerning \emph{c)} (i.e., individual termination), observe that in Figures
\ref{f:inteq}-\ref{f:intun} and Appendix \ref{app:rulesun} the only self
recursive rule is \eqref{un:subsetext}. However, termination of
\eqref{un:subsetext} is guaranteed as, at some point, it arrives at the set
part of the extensional set. At this point \eqref{un:subsetext} becomes
disabled and \eqref{un:subsete} or \eqref{un:subsetvar} are executed. Note that
\eqref{un:subsetext} is not mutually recursive.

Concerning \emph{d)} (i.e., collective termination), we can observe the
following:
\begin{enumerate}
  \item\label{i:satcard}
Rewrite rules of $\SATCARD$ do not call the new rewrite rules as $\LCARD$ does
not admit integer intervals. Then, if one of the new rewrite rules generates a
constraint dealt with by the old rewrite rules, the latter will not call the former.
Then, termination is proved in what concerns the interaction between new and
old rewrite rules.
  \item
Rules \eqref{eq:e}, \eqref{eq:int}, \eqref{neq:e}, \eqref{neq:int}, \eqref{in},
\eqref{nin}, \eqref{disj:e}, \eqref{disj:int}, \eqref{un:subsete},
\eqref{un:123}, \eqref{size:size} and \eqref{size:nsize}, produce only integer
constraints or $\true$. All these constraints are processed by CLP(Q) which
only generates integer constraints. Then, termination is proved.
  \item\label{i:innin}
Rules \eqref{in} and \eqref{nin} trivially terminate. This is important because
other rules in Figures \ref{f:inteq}-\ref{f:intun} produce constraints of the
form $\cdot \in \i$ or $\cdot \notin \i$. So when that happens, termination is
proved.
  \item\label{i:subset}
Rules \eqref{un:subsete}-\eqref{un:subsetext} are either non-recursive or self
recursive, but they never make a recursion on another rule. This is important
because these are the rules called when the identity \eqref{eq:id} is applied.
Then, when \eqref{un:subsete}-\eqref{un:subsetext} are called, termination is
proved.
  \item\label{i:ext}
Rule \eqref{e:ext} generates a $\Size$-constraint in which case
\ref{i:satcard} applies; and a constraint of the form $\cdot
\subseteq \i$ in which case \ref{i:subset} applies.
Then, termination is proved.
  \item
Rule \eqref{neq:ext} calls rules \eqref{in} and \eqref{nin}, in which case
\ref{i:innin} applies. Then, termination is proved.
  \item
Rule \eqref{disj:ext} generates a $\Size$ and a $\disj$ constraints in which
case \ref{i:satcard} applies; and a constraint of the form $\cdot
\subseteq \i$ in which case \ref{i:subset} applies. Then,
termination is proved.
  \item
Rule \eqref{ndisj:all} generates a constraint of the form $\cdot \in \i$
in which case \ref{i:innin} applies. Then, termination is
proved.
  \item
Rules \eqref{un:3} and \eqref{un:12} generate equality, $\Size$ and $\Cup$
constraints in which case \ref{i:satcard} applies. They also generate
constraints of the form $\i = \cdot$ in which case
\ref{i:ext} applies; and constraints of the form $\cdot \subseteq \i$
in which case \ref{i:subset} applies. Then, termination is
proved.
\end{enumerate}
\end{proof}

\section{\label{app:operators}Two Operators Definable as $\LINT$ Formulas}
In this section we show two more operators definable as $\LINT$ formulas.

$lb\_ub(S,Lb,Ub)$ holds if the set $S$ is
partitioned into two equal halves one containing the lower elements ($Lb$) and
the other the higher elements ($Ub$). In particular, if the cardinality
of $S$ is odd, the predicate does not hold.
\begin{equation}
lb\_ub(S,Lb,Ub) \defs
       \Cup(Lb,Ub,S) \land Lb \disj Ub
       \land Lb \subseteq [\_,m] \land \Size(Lb,k)
       \land m < n
       \land Ub \subseteq [n,\_] \land \Size(Ub,k)
\end{equation}

$ssucc(A,x,y)$ holds if $y \in A$ is the successor of
$x \in A$.
\begin{equation}
\begin{split}
ssu&cc(A,x,y) \defs \\
    & x < y
     \land A = \{x,y \plus B\} \land x \notin B \land y \notin B
     \land \Cup(Inf,Sup,B) \land Inf \disj Sup
     \land Inf \subseteq [\_,x - 1]
     \land Sup \subseteq [y + 1,\_]
\end{split}
\end{equation}

\end{document}